\RequirePackage{fix-cm}
\documentclass[sglanonrev]{informs4}

\RequirePackage{tgtermes}
\RequirePackage{newtxtext}
\RequirePackage{newtxmath}
\RequirePackage{bm}
\RequirePackage{endnotes}
\usepackage{microtype}
\usepackage{xcolor}
\renewcommand{\theARTICLETOP}{}

\OneAndAHalfSpacedXII

\usepackage{graphicx}
\usepackage{amsmath}
\usepackage[linesnumbered,ruled,vlined]{algorithm2e}
\usepackage[nolist]{acronym}
\usepackage{comment}
\usepackage{booktabs}
\usepackage{silence}
\usepackage[compatibility=false]{caption}
\usepackage{subcaption}
\usepackage{tikz}
\usepackage[section]{placeins}
\usepackage{natbib}
\bibpunct[, ]{(}{)}{,}{a}{}{,}%
\def\bibfont{\fontsize{11}{11}\selectfont}%
\usepackage[colorlinks=true,citecolor = blue, linkcolor = blue,urlcolor  = blue]{hyperref}

\EquationsNumberedThrough

\TheoremsNumberedThrough
\ECRepeatTheorems

\MANUSCRIPTNO{}

\makeatletter
\@ifundefined{assumption}{\newtheorem{assumption}{Assumption}}{}
\@ifundefined{proposition}{}{}
\@ifundefined{corollary}{\newtheorem{corollary}{Corollary}}{}
\@ifundefined{definition}{}{}
\@ifundefined{remark}{\newtheorem{remark}{Remark}}{}
\makeatother

\providecommand{\bA}{\boldsymbol{A}}
\providecommand{\ba}{\boldsymbol{a}}
\providecommand{\bB}{\boldsymbol{B}}
\providecommand{\bD}{\boldsymbol{D}}
\providecommand{\bE}{\boldsymbol{E}}
\providecommand{\bI}{\boldsymbol{I}}
\providecommand{\bJ}{\boldsymbol{J}}
\providecommand{\bP}{\boldsymbol{P}}
\providecommand{\bPhi}{\boldsymbol{\Phi}}
\providecommand{\bX}{\boldsymbol{X}}
\providecommand{\bb}{\boldsymbol{b}}
\providecommand{\bc}{\boldsymbol{c}}
\providecommand{\bg}{\boldsymbol{g}}
\providecommand{\blambda}{\boldsymbol{\lambda}}
\providecommand{\bmu}{\boldsymbol{\mu}}
\providecommand{\bnu}{\boldsymbol{\nu}}
\providecommand{\bpi}{\boldsymbol{\pi}}
\providecommand{\br}{\boldsymbol{r}}
\providecommand{\bv}{\boldsymbol{v}}
\providecommand{\bw}{\boldsymbol{w}}
\providecommand{\by}{\boldsymbol{y}}
\providecommand{\bx}{\boldsymbol{x}}
\providecommand{\bzero}{\boldsymbol{0}}
\providecommand{\bDelta}{\boldsymbol{\Delta}}
\providecommand{\R}{\mathbb{R}}
\providecommand{\E}{\mathbb{E}}
\providecommand{\cS}{\mathcal{S}}
\providecommand{\cA}{\mathcal{A}}
\providecommand{\cB}{\mathcal{B}}
\providecommand{\cN}{\mathcal{N}}
\providecommand{\qedhere}{}

\def\graphscaler{1.0}
\def\edits#1{{\color{black}#1}}

\begin{document}
\hfuzz=10000pt
\vfuzz=10000pt
\hbadness=10000
\vbadness=10000

\acrodef{MDP}{Markov Decision Process}
\acrodef{MDPs}{Markov Decision Processes}
\acrodef{ADP}{Approximate Dynamic Programming}
\acrodef{DP}{Dynamic Programming}
\acrodef{LP}{Linear Programming}
\acrodef{ALP}{Approximate Linear Programming}
\acrodef{DALP}{Dual Approximate Linear Programming}
\acrodef{VFA}{Value Function Approximation}
\acrodef{PIA}{Primal Iterative Algorithm}
\acrodef{DIA}{Dual Iterative Algorithm}
\acrodef{CI}{Confidence Interval}

\RUNAUTHOR{Anonymous}

\RUNTITLE{Dual-Based Weight Selection for \ac{ALP}}

\TITLE{Dual-Based Weight Selection for Approximate Linear Programming}

\ARTICLEAUTHORS{%
\AUTHOR{Su Li}
\AFF{Department of Computational Applied Mathematics \& Operations Research, Rice University, \EMAIL{sl355@rice.edu}}
\AUTHOR{Andre A. Cire}
\AFF{Operations Management and Statistics, University of Toronto, \EMAIL{andre.cire@Rotman.utoronto.ca}}
\AUTHOR{Adam Diamant}
\AFF{Schulich School of Business, York University, \EMAIL{adiamant@schulich.yorku.ca}}
\AUTHOR{Vahid Sarhangian}
\AFF{Department of Mechanical \& Industrial Engineering, University of Toronto, \EMAIL{v.sarhangian@utoronto.ca}}
}

\acresetall

\ABSTRACT{%
\edits{\ac{ALP} is widely used for large-scale \ac{MDPs}, but its performance can be sensitive to the choice of state-relevance weights, which are typically selected heuristically. Performance bounds suggest aligning these weights with the discounted occupancy measure of the induced policy, and existing primal approaches address this through repeated greedy-policy construction. Nonetheless, they lack convergence guarantees and are computationally expensive. We propose a dual-based method that uses projected occupancy information from the \ac{ALP} dual solution to construct a smooth stochastic policy and update the state-relevance weights, which avoids separate greedy-action calculations. We establish conditions under which the weights match the discounted occupancy of the induced policy and prove uniqueness and global convergence under appropriate smoothing. We also derive an a posteriori policy-loss bound that separates error from the weighted Bellman residual, occupancy mismatch, and stochastic-versus-greedy disagreement. Experiments on classical queueing and multi-priority scheduling problems show that the proposed approach reduces sensitivity to fixed weights and achieves comparable or better policy quality than primal updates at lower computational cost. Finally, we show that adaptive weighting is most valuable when the basis functions are sufficiently expressive for occupancy information to influence the resulting policy.
}}


\KEYWORDS{Approximate linear programming, Markov decision processes, State-relevance weights}

\maketitle
\acresetall

\IfFileExists{\jobname.ent}{%
\begingroup
\parindent 0pt
\parskip 0.0ex
\def\enotesize{\normalsize}
\theendnotes
\endgroup
}{}

\section{Introduction}
\label{sec:introduction}

Markov decision processes (MDPs) provide a general framework for sequential decision making, with applications in service and healthcare operations  \citep{maxwell2010approximate, dong2025multiclass}, appointment scheduling \citep{patrick2008dynamic,saure2012dynamic,diamant2021dynamic}, inventory control \citep{abouee2026platelet,lin2020revisiting}, and vehicle routing \citep{secomandi2009reoptimization}, among others. However, computing optimal policies for realistically-sized MDPs is often intractable because exact methods must account for large state and action spaces \citep{puterman2014markov}. A widely applied and extensively studied approach \edits{for overcoming this issue is} \ac{ALP} \citep{schweitzer1985generalized,de2003linear}, which approximates the optimal value function as a linear combination of prespecified basis functions. The coefficients of these basis functions are determined by solving a linear program constructed from the Bellman inequalities, and the resulting value-function approximation can then be used to construct approximate policies.

This linear \edits{value function} approximation, however, introduces an additional modeling choice. Because the basis-function representation couples the approximated values across states through a common set of coefficients, these values cannot be optimized independently for every state, as in the exact MDP formulation. The decision maker must therefore specify the relative importance of the states in the underlying \ac{ALP} model, which is done by assigning \textit{state-relevance weights} in the objective of the associated linear program. In much of the \ac{ALP} literature, the state-relevance weights are specified a priori or numerically; common choices include, e.g., uniformly-chosen weights~\citep{le2004choose} or the initial state distribution \citep{saure2012dynamic}.

Although straightforward to implement, these heuristic choices provide limited guidance on how the resulting policies \edits{induced by the \ac{ALP} solution} will perform in practice, and their specification often depends on undocumented assumptions or implementation details. These concerns have motivated a stream of research mitigating the sensitivity of \ac{ALP} solutions to state-relevance weights. For instance, \citet{taylor2014analysis} analyze how state-relevance weights and constraint-sampling distributions affect the approximation accuracy of regularized \ac{ALP} formulations, whereas \citet{beuchat2016alleviating} combine approximations obtained under multiple weight specifications to reduce this sensitivity. \edits{More recently, \citet{pakiman2025self} develop a self-guided \ac{ALP} that progressively introduces random basis functions and uses guiding constraints to implicitly adapt the state-relevance weights, thereby reducing the importance of the initial choice of state-relevance distribution.}

Given a fixed set of basis functions, the key challenge is the potential misalignment between the state-relevance weights and the state-occupancy distribution of the policy they induce, which is central to existing \ac{ALP} performance bounds \citep{de2003linear, de2008choosing}. This alignment is difficult to model and optimize because of the complex interdependence between the state-relevance weights and the induced policy. For example, repeated policy simulation may be required to evaluate the effects of a candidate weight vector. The literature directly addressing this problem, however, is limited. In an unpublished manuscript, \citet{le2004choose} formulate a fixed-point algorithm that iteratively solves the \ac{ALP}, computes the state-occupancy distribution of the resulting policy, and uses this distribution to update the state-relevance weight vector. \edits{Similarly, \citet{de2008choosing} establish the existence of state-relevance weight vectors that satisfy a family of fixed-point equations and demonstrate a stochastic approximation procedure for computing them numerically. However, neither work establishes convergence to a desired fixed point, and both are computationally challenging to derive for problems with large action spaces.}

\smallskip
\noindent 
\textit{Contributions.} We address these limitations by developing a dual-based procedure for selecting state-relevance weights with convergence and ex-post performance guarantees. \edits{Our objective is to provide a standardized approach that is simple to implement, reproducible, and applicable across problems. The method leverages the dual variables of the \ac{ALP}, which satisfy projected flow-balance constraints and therefore retain partial information about state-action occupancy.}

\edits{At each iteration, our procedure first solves the dual of the ALP. It then constructs a smoothed stochastic policy by combining the selected dual solution with a prespecified target policy through a softmax operator, updating the state-relevance weights using the resulting discounted occupancy measure. We show that a fixed point of this update coincides with the occupancy measure of the induced stochastic dual policy.} The proof leverages the stability of the selected dual solution with respect to the state-relevance weights and Lipschitz continuity introduced by policy smoothing. These properties make the resulting occupancy mapping continuous and ensure the existence of a fixed point. Under smoothing conditions, it is also contractive, and therefore unique with global geometric convergence. By constructing the policy directly from the dual solution, the procedure also avoids the separate greedy-policy calculation required by existing primal-based updates.

\edits{We evaluate our approach on queueing and multi-priority scheduling problems, two settings in which \ac{ALP} has been widely applied. The results show that our algorithm matches or improves upon existing primal-based methods at substantially lower computational cost. More importantly, they demonstrate that adaptive selection of state-relevance weights can materially improve policy quality, with large performance gaps relative to heuristically chosen weighting schemes. The magnitude of these gains, however, depends on the expressivity of the basis functions. That is, with limited number of basis functions, performance is dominated by approximation error of the ALP, whereas richer architectures allow adaptive weights to more effectively influence the resulting policy.
}

\edits{The paper is organized as follows. Section~\ref{sec:problem_description} formulates the weight-selection problem, and Section~\ref{sec:dual_iterative_approach} develops the proposed dual-based approach. Section~\ref{sec:convergence_performance_analysis} establishes its fixed-point, convergence, and performance properties. Section~\ref{sec:numerical_experiments} presents the numerical experiments, and Section~\ref{sec:conclusion} concludes.}

\section{Problem Description}
\label{sec:problem_description}

We consider a discrete-time, infinite-horizon, discounted MDP with finite state space $\mathcal S$. For each state $x\in\mathcal S$, let $\mathcal A(x)$ denote the finite set of admissible actions. When the system is in state $x \in \mathcal S$, the decision maker selects an action $a \in \mathcal A(x)$, incurs the immediate cost $g(x,a) \in \mathbb{R}$, and transitions to state $y\in\mathcal S$ with probability $P(y\mid x,a)$. Future costs are discounted by a factor $\alpha\in(0,1)$. 

Decisions are made according to a stationary randomized policy $\pi$. Specifically, let $X_t$ and $A_t$ denote the state and action at time $t$, respectively. Conditional on $X_t=x$ for some $x \in \mathcal S$, action $A_t$ is drawn from $\pi(\cdot\mid x)$, and the next state $X_{t+1}$ is drawn from $P(\cdot\mid x,A_t)$. The optimal value $J^*(x)$ is the minimum expected discounted cost when the initial state at \edits{time $t=0$ is $x$:}
\begin{align}
  \label{eq:optimal_value}
  J^*(x)
  =
  \min_{\pi}
  \,
  \mathbb E_\pi\left[
      \sum_{t=0}^{\infty}\alpha^t g(X_t,A_t)
      \,\middle|\,
      X_0=x
  \right],
  \;\; \forall x\in\mathcal S,
\end{align}
where the minimum is taken over all stationary randomized policies. Because \edits{we assume} the state and action spaces are finite and $\alpha\in(0,1)$, an optimal stationary deterministic policy exists \citep{puterman2014markov}.
Equivalently, $J^*$ is the unique solution to the Bellman optimality equations
\begin{equation}
\label{eq:bellman_optimality}
  J^*(x)
  =
  \min_{a\in\mathcal A(x)}
  \left\{
  g(x,a)
  +
  \alpha\sum_{y\in\mathcal S}P(y\mid x,a)J^*(y)
\right\},
\;\; \forall x\in\mathcal S.
\end{equation}
Although these equations characterize $J^*$ exactly, solving them becomes computationally prohibitive when the state and action spaces are large. \edits{This motivates the use of \ac{ALP}, which addresses this difficulty by approximating $J^*$ within a lower-dimensional function class and replacing the Bellman equations with linear inequalities.} Specifically, let $\phi_1,\ldots,\phi_K$ be a collection of basis functions, where $\phi_i : \mathcal S \rightarrow \mathbb{R}$. For a given coefficient vector $\br=(r_1,\ldots,r_K) \in \mathbb{R}^K$, the value at state $x$ is approximated by
the sum $
\sum_{k=1}^K r_k\phi_k(x)
\approx 
  J^*(x).
$
Thus, rather than estimating a separate value for every state, the approximation is determined by only the $K$ coefficients of $\br$.

\edits{The $|\mathcal S|$ Bellman equations in \eqref{eq:bellman_optimality} assign a distinct value to each state. In contrast, the \ac{ALP} approximates these $|\mathcal S|$ values using only $K$ coefficients, where typically $K\ll|\mathcal S|$.} This dimensionality reduction implies that the approximate values cannot be adjusted independently across states, and reducing the error in one region of the state space need not reduce it elsewhere. To choose $\br$ among these feasible approximations, the \ac{ALP} requires a state-relevance weight vector $\bc\in\mathbb R^{|\mathcal S|}$, where each component $c(x)\geq 0$
determines the emphasis placed on state $x$, with the weights typically
normalized so that $\sum_{x\in\mathcal S}c(x)=1$. That is, given $\bc$, the
\ac{ALP} selects $\br$ by solving the linear program
\begin{equation}
\label{eq:alp_primal_expanded}
\tag{pALP}
\begin{aligned}
\max_{\br\in\mathbb R^K}\quad
&
\sum_{x\in\mathcal S}c(x)
\sum_{k=1}^K r_k\phi_k(x) \\
\text{s.t.}\quad
&
\sum_{k=1}^K r_k\phi_k(x)
\leq
g(x,a)
+
\alpha\sum_{y\in\mathcal S}P(y\mid x,a)
\sum_{k=1}^K r_k\phi_k(y),
&& \forall x\in\mathcal S, \forall a \in \mathcal A(x).
\end{aligned}
\end{equation}
Each constraint requires the approximate value at state $x$ to be no greater than the one-period cost plus the discounted approximate value of the next state, for every admissible action. The objective then makes this lower approximation as large as possible. 
\edits{Given an optimal solution $\br^*$ to \eqref{eq:alp_primal_expanded}, the corresponding value function approximation (VFA) is $\widehat J_{\bc}(x) = \sum_{k=1}^K r^*_k(\bc)\phi_k(x)$. Thus, a standard approach is to construct a greedy policy that, in each state, selects the action minimizing the one-period cost plus the discounted approximate value of the next state with respect to $\widehat J_{\bc}(x)$:}
\begin{equation}
\label{eq:greedy_policy_from_alp}
u_{\bc}(x)
\in
\arg\min_{a\in\mathcal A(x)}
\left\{
g(x,a)
+
\alpha\sum_{y\in\mathcal S}
P(y\mid x,a)\widehat J_{\bc}(y)
\right\},
\;\; \forall x\in\mathcal S.
\end{equation}

\edits{Ideally, $\bc$ would be selected to minimize the expected discounted cost of the greedy policy in \eqref{eq:greedy_policy_from_alp}. However, because policy performance depends on $\bc$ in a complex way, direct optimization is generally impractical. As a result, state-relevance weights are often chosen using simple heuristics, such as uniform weights, the initial-state distribution, or problem-specific rules. Rather than relying on such heuristics,} \citet{de2003linear} suggest choosing the weights to reflect the discounted occupancy of the induced policy. Specifically, let $\bnu$ denote the initial-state distribution over $\mathcal S$. For a stationary randomized policy $\pi$, define its induced transition probabilities by
$
P_\pi(x,y)
=
\sum_{a\in\mathcal A(x)}
\pi(a\mid x)P(y\mid x,a)
$
for all $x,y\in\mathcal S$. The normalized discounted occupancy of $\pi$ is
\begin{equation}
\label{eq:discounted_occupancy}
\mu_{\pi,\bnu}(x)
:=
(1-\alpha)
\sum_{t=0}^{\infty}\alpha^t
\sum_{z\in\mathcal S}\nu(z)P_\pi^t(z,x),
\qquad x\in\mathcal S,
\end{equation}
where $P_\pi^t(z,x)$ is the probability of reaching state $x$ after $t$ periods from state $z$ under policy $\pi$, with the initial state $z$ weighted according to $\bnu$. Thus, $\mu_{\pi,\bnu}(x)$ measures the discounted frequency with which state $x$ is visited under policy $\pi$. \edits{The rationale of \citet{de2003linear} is that the \ac{ALP} places greater emphasis on approximation accuracy in states with larger state-relevance weights. However, policy performance depends most strongly on states that the induced policy visits frequently, as captured by its discounted occupancy. Choosing $\bc$ to approximate this occupancy therefore focuses the approximation on the states that matter most for policy performance.}

Based on this rationale, \citet{le2004choose} propose, in an unpublished manuscript, an iterative procedure that repeatedly solves \eqref{eq:alp_primal_expanded}, constructs a greedy policy \eqref{eq:greedy_policy_from_alp}, computes the resulting discounted occupancy measure, and uses it to update the state-relevance weights. We refer to this method as \ac{PIA}; algorithmic details and performance bounds are provided in Appendix~\ref{app:pia_algorithm}. However, \ac{PIA} is not guaranteed to converge, and its implementation requires simulating \edits{the greedy policy rollouts \eqref{eq:greedy_policy_from_alp} for every candidate $\bc$ to evaluate the discounted occupancies. This can be computationally costly for large state and action spaces.} 


\section{Dual-Based Selection of State-Relevance Weights}
\label{sec:dual_iterative_approach} 

The approach of \citet{le2004choose} motivates a general successive-approximation framework for selecting state-relevance weights. Starting from an initial $\bc^{(0)}$, each iteration generates a policy based on the current weights, computes the normalized discounted occupancy of this policy, and uses a weight-update rule to obtain the next vector. The policy generator determines how information obtained under $\bc^{(m)}$ translates into a policy, whereas the weight-update rule determines how the resulting occupancy is incorporated into $\bc^{(m+1)}$. Algorithm~\ref{alg:general_weight_framework} summarizes the framework; the returned weights are then used to define the VFA for a greedy policy.

\begin{algorithm}[t!]
\small
\caption{General Occupancy-Based Weight-Selection Framework}
\label{alg:general_weight_framework}
\SetAlgoLined
\DontPrintSemicolon
\KwIn{Initial weights $\bc^{(0)}\in\Delta(\mathcal S)$; policy generator $\mathcal G$; weight-update rule $\mathcal U$; initial-state distribution $\bnu$; tolerance $\varepsilon>0$; maximum number of updates $M$}
\KwOut{State-relevance weight vector $\bc$}
\For{$m\leftarrow 0,1,\ldots,M$}{
  Generate a policy $\pi^{(m)}\leftarrow\mathcal G(\bc^{(m)})$\;
  Compute its normalized discounted occupancy
  $\bmu^{(m)}\leftarrow\bmu_{\pi^{(m)},\bnu}$\;
  \If{$\|\bmu^{(m)}-\bc^{(m)}\|_\infty\leq\textnormal{given precision}$}{
    \Return{$\bc^{(m)}$}\;
  }
  Update the weights
  $\bc^{(m+1)}\leftarrow
  \mathcal U(\bc^{(m)},\bmu^{(m)})$\;
}
\end{algorithm}

The framework therefore leaves two components to be specified: the
policy generator $\mathcal G$ and the weight-update rule $\mathcal U$.
\edits{It also assumes the existence of a \textit{self-consistent} weight vector $\bc$, i.e., a vector that is equal to the normalized discounted occupancy of the policy generated from $\bc$. We next define both components using the dual \ac{ALP}. Specifically, the dual solution generates a smooth randomized policy, and its normalized discounted occupancy determines the updated state-relevance weights. We refer to the resulting procedure as the \ac{DIA} to distinguish it from \ac{PIA}. Smoothing provides conditions under which the updates converge, while extracting the policy directly from the dual solution avoids repeated greedy-action calculations.}

\smallskip
\noindent 
\textit{Policy Generation.} We assume that the constant function belongs to the span of the basis functions; without loss of generality, take $\phi_1(x)=1$ for all
$x\in\mathcal S$. With each state-action constraint in \eqref{eq:alp_primal_expanded} we associate a nonnegative dual variable $\lambda(x,a)$, and write the dual linear program of \ac{ALP} as
\begin{equation}
\label{eq:alp_dual_expanded}
\tag{dALP}
\begin{aligned}
  \min_{\blambda\geq\bzero}\quad
  &
  \sum_{x\in\mathcal S}
  \sum_{a\in\mathcal A(x)}
  g(x,a)\lambda(x,a) \\
  \text{s.t.}\quad
  &
  \sum_{x\in\mathcal S}
  \sum_{a\in\mathcal A(x)}
  \lambda(x,a)
  \left[
  \phi_k(x)
  -
  \alpha\sum_{y\in\mathcal S}P(y\mid x,a)\phi_k(y)
  \right]
  =
  \sum_{x\in\mathcal S}c(x)\phi_k(x),
  \;\; \forall k=1,\ldots,K.
\end{aligned}
\end{equation}
The dual variable $\lambda(x,a)$ represents the weight assigned to the
\ac{ALP} constraint associated with the state-action pair $(x,a)$.
Because $\phi_1(x)=1$ for all $x\in\mathcal S$ and
$\sum_{x\in\mathcal S}c(x)=1$, the dual constraint corresponding to
$\phi_1$ implies that every feasible dual solution satisfies
$\sum_{x\in\mathcal S}
\sum_{a\in\mathcal A(x)}
\lambda(x,a)
=
\frac{1}{1-\alpha}.$ Thus, $\blambda$ has the same total mass as an unnormalized discounted state-action occupancy measure. Following \citet[Definition~6.1.3]{guestrin2003planning}, we construct a randomized policy associated with a dual solution. Specifically, at every state with positive dual mass, the probability of selecting an action is obtained by normalizing the corresponding dual variables:
\begin{equation}
\label{eq:dual_fluid_policy}
\pi_{\blambda}(a\mid x)
=
\frac{\lambda(x,a)}
{\sum_{a'\in\mathcal A(x)}\lambda(x,a')},
\;\; \forall a\in\mathcal A(x),
\end{equation}
provided that $\sum_{a'\in\mathcal A(x)}\lambda(x,a')>0$.  We note that, for an optimal primal-dual pair $(\br^*,\blambda^*)$,
complementary slackness between
\eqref{eq:alp_primal_expanded} and
\eqref{eq:alp_dual_expanded} implies that
$\lambda^*(x,a')>0$ for some $a' \in \mathcal A(x)$ only if the corresponding primal constraint is
binding. Hence, $a'$ is a greedy action in
\eqref{eq:greedy_policy_from_alp} for $x$.

If, instead, $\sum_{a'\in\mathcal A(x)}\lambda(x,a')=0$, then $\blambda$ provides no action information at state $x$, and any probability distribution over $\mathcal A(x)$ may be used. To establish convergence and design a practical approach, we adopt a smooth, temperature-controlled policy parameterization from reinforcement learning \citep{papini2022smoothing} that applies a softmax transformation to the dual variables. We then blend the resulting distribution with a prespecified target policy $q(\cdot \,|\, x)$, following the general principle of mixture-policy updates \citep{kakade2002approximately}. The target policy may, for example, select actions uniformly at random, or incorporate domain-specific knowledge.

Specifically, we define the target-guided softmax policy in two steps.
First, among all optimal solutions to \eqref{eq:alp_dual_expanded}, we
select the unique minimum-norm solution
\begin{equation}
  \label{eq:min_norm_dual}
  \overline{\blambda}
  =
  \arg\min_{\blambda^*}
  \left\{
  \frac{1}{2}\lVert\blambda^*\rVert_2^2
  :
  \blambda^*
  \text{ is feasible and optimal for \eqref{eq:alp_dual_expanded}}
  \right\}.
\end{equation}
This solution can be computed by first solving
\eqref{eq:alp_dual_expanded} to obtain its optimal objective value, and
then solving the strictly convex quadratic program
\eqref{eq:min_norm_dual}, subject to the dual feasibility constraints
and the requirement that the dual objective equal that optimal value. Second, we transform $\overline{\blambda}$ into a randomized policy. For
a temperature parameter $\tau>0$, a precision $\epsilon>0$, and a prespecified target policy $q(\cdot \,|\, x)$, we define the randomized policy for each $x \in \mathcal S$ and $a \in \mathcal A(x)$ as
\begin{equation}
\label{eq:target_guided_softmax_policy}
\begin{aligned}
\pi_{\tau,\epsilon,q}(a\mid x;\overline{\blambda})
&=
\frac{\sum_{b\in\mathcal A(x)}\overline{\lambda}(x,b)}
     {\sum_{b\in\mathcal A(x)}\overline{\lambda}(x,b)+\epsilon}
\frac{\exp\!\left(\overline{\lambda}(x,a)/\tau\right)}
     {\sum_{a'\in\mathcal A(x)}\exp\!\left(\overline{\lambda}(x,a')/\tau\right)}
+
\frac{\epsilon}
     {\sum_{b\in\mathcal A(x)}\overline{\lambda}(x,b)+\epsilon}
q(a\mid x).
\end{aligned}
\end{equation}
The temperature parameter $\tau$ controls the sensitivity of the action probabilities to the dual variables: larger values produce a smoother distribution, whereas smaller values place greater probability on actions with larger dual variables. The parameter $\epsilon$ controls the influence of the target policy relative to the total dual mass at state $x$. In particular, if $\sum_{b\in\mathcal A(x)}\overline{\lambda}(x,b)=0$, then \eqref{eq:target_guided_softmax_policy} reduces to $q(a\mid x)$. As the sum of duals increases relative to $\epsilon$, the policy places greater weight on the softmax distribution.

\smallskip
\noindent 
\textit{Weight-Update Rule.}
At iteration $m$, let $\overline{\blambda}^{(m)}$ be the minimum-norm dual solution obtained from \eqref{eq:min_norm_dual} using the current
weights $\bc^{(m)}$, and let $\pi^{(m)}=\pi_{\tau,\epsilon,q}(\cdot\mid\cdot; \overline{\blambda}^{(m)})$ be the corresponding policy from \eqref{eq:target_guided_softmax_policy}. We compute its normalized discounted occupancy $\bmu^{(m)}$, whose component at state $x$ is
$\mu^{(m)}(x)=\mu_{\pi^{(m)},\bnu}(x)$ as defined in \eqref{eq:discounted_occupancy}, and update the state-relevance weights
according to
\begin{equation}
\label{eq:dia_weight_update}
\bc^{(m+1)}
=
(1-\eta)\bc^{(m)}+\eta\bmu^{(m)},
\qquad \eta\in(0,1].
\end{equation}
\edits{Since $\bc^{(m)}$ and $\bmu^{(m)}$ are probability vectors,
\eqref{eq:dia_weight_update} ensures that $\bc^{(m+1)}$ is also a probability vector. Setting $\eta=1$ fully replaces the current weights with the new occupancy measure, while $\eta<1$ dampens the update between iterations. At a fixed point, $\bc^{(m)}=\bmu^{(m)}$, implying that the state-relevance weights equal the normalized discounted occupancy of the policy they induce.}

\section{Convergence and Performance Analysis}
\label{sec:convergence_performance_analysis}

In this section, we show the convergence of the proposed approach in Section \ref{sec:convergence} and present a performance analysis in Section \ref{sec:performance}. \edits{For compactness, we express the \ac{ALP} formulation using vector and matrix notation.} Specifically, let $\Delta(\mathcal S) := \{ \bc \colon \bc \ge 0, \sum_{x \in \mathcal S} c(x) = 1 \}$ be the simplex over $\mathcal S$, denoting the feasible set of weights, and $\mathcal N := \{(x, a) \colon x \in \mathcal S, \, a \in \mathcal A(x)\}$ be the set of feasible state-action pairs.  We let $\bg\in\mathbb R^{|\mathcal N|}$ be the vector of one-period costs with component $g_{(x,a)}=g(x,a)$, and denote by $\bPhi\in\mathbb R^{|\mathcal S|\times K}$ the matrix of basis function evaluations per state, with entries $\Phi_{xk}=\phi_k(x)$ and linearly independent columns. Thus, $\bPhi\br\in\mathbb R^{|\mathcal S|}$ is the vector of approximate values across states. 

Define the Bellman-constraint matrix $\bA\in\mathbb R^{|\mathcal N|\times|\mathcal S|}$, whose rows are indexed by admissible state-action pairs and whose columns are indexed by states,
as 
$ 
A_{(x,a),y} =
\mathbf 1\{y=x\}-\alpha P(y\mid x,a), \forall y\in\mathcal S
$. The component of $\bA\bPhi\br$ associated with $(x,a)$ is
$
(\bA\bPhi\br)_{(x,a)}
=
  (\bPhi\br)(x)
  -
  \alpha\sum_{y\in\mathcal S}
  P(y\mid x,a)(\bPhi\br)(y).
$
Thus, 
$\bc^\top\bPhi\br$  
and 
$\bA\bPhi\br\leq\bg$ 
represent the objective function and the state-action
constraints in \eqref{eq:alp_primal_expanded}, respectively. 
Similarly, recall that $\blambda\in\mathbb R_+^{|\mathcal N|}$ denotes the vector of dual variables. Then,
$\bg^\top\blambda$ 
and 
$(\bA\bPhi)^\top\blambda=\bPhi^\top\bc$ are the objective function and constraints of \eqref{eq:alp_dual_expanded}, respectively.

\subsection{Convergence} 
\label{sec:convergence}

The convergence analysis relies on two supporting Lipschitz-stability results: one for the selected dual solution and another for the resulting stochastic policy. We first establish in Lemma \ref{lemma:continuous_dual_selection_v2} that the selected dual solution is Lipschitz continuous in the state-relevance weights, ruling out abrupt changes in the dual variables caused solely by small perturbations in the weights.

\begin{lemma}[Dual Stability.]
\label{lemma:continuous_dual_selection_v2}

Let $\overline{\blambda}(\bc)$ be the optimal dual vector with respect to state-relevance weights $\bc$ calculated via \eqref{eq:min_norm_dual}. Then, there exists a constant $0<L_{d}<+\infty$ such that
\[
\lVert \overline{\blambda}(\bc)-\overline{\blambda}(\bc') \rVert_1
\le 
L_{d} \, \lVert \bc-\bc' \rVert_1,
\;\; \forall \bc,\bc'\in \Delta(\mathcal{S}).
\]
\end{lemma}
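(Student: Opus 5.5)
The plan is to rewrite $\overline{\blambda}(\bc)$ as the solution of a parametric strictly convex quadratic program in which $\bc$ appears only on the right-hand side. The solution map of such a program is continuous and piecewise affine, and a continuous piecewise affine map on the convex set $\Delta(\mathcal S)$ is globally Lipschitz.

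First, define the optimal value $v(\bc)$ of \eqref{eq:alp_dual_expanded}. By LP duality it equals the optimal value of \eqref{eq:alp_primal_expanded}, so $v(\bc)=\max_{\br:\bA\bPhi\br\le\bg}\bc^\top\bPhi\br$. I will argue that this primal is feasible and bounded for every $\bc\in\Delta(\mathcal S)$.
\begin{itemize}
\item Feasibility: since $\phi_1\equiv 1$, the choice $\br=(r_1,0,\ldots,0)$ with $r_1\le \min g/(1-\alpha)$ is feasible.
\item Boundedness: each feasible $\bPhi\br$ satisfies $\bPhi\br\le T(\bPhi\br)$, where $T$ is the Bellman operator in \eqref{eq:bellman_optimality}. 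By monotonicity of $T$ this gives $\bPhi\br\le J^*$, so $\bc^\top\bPhi\br\le \bc^\top J^*$.
\end{itemize}
Since $\bPhi$ has full column rank, the primal feasible polyhedron has at least one vertex. Hence $v(\bc)=\max_{j}\bc^\top\bPhi\br_j$ over its finitely many vertices $\br_j$. This makes $v$ convex, piecewise linear and Lipschitz on $\Delta(\mathcal S)$, with constant $\max_j\|\bPhi\br_j\|_\infty$ with respect to $\|\cdot\|_1$.

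Second, \eqref{eq:min_norm_dual} becomes the quadratic program
\[
\min_{\blambda\ge\bzero}\ \tfrac12\|\blambda\|_2^2\quad\text{s.t.}\quad (\bA\bPhi)^\top\blambda=\bPhi^\top\bc,\ \ \bg^\top\blambda=v(\bc).
\]
Its right-hand side $\bb(\bc):=(\bPhi^\top\bc,\,v(\bc))$ is a Lipschitz function of $\bc$ by the first step. It therefore suffices to show that the map $\bb\mapsto\blambda^\star(\bb)$ is Lipschitz on the polyhedral set $B$ of right-hand sides for which this QP is feasible, a set which contains $\bb(\Delta(\mathcal S))$.

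For that, I use the KKT system. For each candidate active set $I$ (the coordinates forced to zero), the KKT conditions restricted to $I$ form a linear system. Taking the least-norm solution of that system gives $\blambda^\star(\bb)=\bM_I\bb$ for some matrix $\bM_I$, valid on the polyhedral region of right-hand sides for which $I$ is an optimal active set. There are finitely many $I$, so $\blambda^\star$ is piecewise affine, in fact piecewise linear, on $B$. Strict convexity gives uniqueness, and with it continuity of $\blambda^\star$ on $B$. The continuity can be obtained by a Berge-type argument combined with Hoffman's bound, which guarantees that the feasible-set map $\bb\mapsto\{\blambda\ge\bzero:\cdot=\bb\}$ is Lipschitz in the Hausdorff sense.

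Finally, given $\bb,\bb'\in B$, the segment between them is split into finitely many subsegments, each lying in a single piece. Applying the triangle inequality along the segment gives $\|\blambda^\star(\bb)-\blambda^\star(\bb')\|_1\le(\max_I\|\bM_I\|_{1})\|\bb-\bb'\|_1$. Composing with the Lipschitz map $\bc\mapsto\bb(\bc)$ yields the constant $L_d$. If $L_d$ comes out as $0$, any positive constant also works, which matches the requirement $0<L_d$.

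I expect the main obstacle to be the third step: the rigorous claim that the solution of a strictly convex QP is continuous and piecewise affine in its right-hand side, including the patching along the segment. Continuity at the boundaries between active-set regions is exactly what makes the piecewise bound global. My first attempt will invoke the standard parametric-QP result (a polyhedral variational inequality has a piecewise affine solution map). Failing that, I will derive continuity directly from Hoffman's lemma together with strong convexity of $\tfrac12\|\cdot\|_2^2$, which yields a H\"older bound that the finite-piece structure then upgrades to a Lipschitz bound.
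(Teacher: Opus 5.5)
Your proposal is correct and follows essentially the same strategy as the paper. The dual optimal value is piecewise affine in $\bc$; the minimum-norm selection is the unique solution of a strictly convex parametric quadratic program, so it is continuous and piecewise affine; and continuity together with piecewise affinity on a convex set gives a global Lipschitz constant. The difference is organizational: you lift the parameter to $\bb(\bc)=(\bPhi^\top\bc,v(\bc))$ and compose two Lipschitz maps, deriving the piece formulas $\boldsymbol{M}_I\bb$ from KKT active sets and writing out the segment argument, whereas the paper partitions $\Delta(\mathcal S)$ into regions where $z(\bc)$ is affine, cites the multiparametric QP result on each region, and glues the pieces using uniqueness of the minimum-norm solution.
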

\begin{proof}{Proof of Lemma \ref{lemma:continuous_dual_selection_v2}.}
For any
$\bc\in\Delta(\mathcal S)$, the dual feasible set of \eqref{eq:alp_dual_expanded} is nonempty and
compact: feasibility follows from the discounted state-action occupancy
of any stationary policy with initial distribution $\bc$, and
compactness follows from
$\sum_{(x,a)\in\mathcal N}\lambda(x,a)=1/(1-\alpha)$. Hence, the optimal set is
nonempty and compact, the optimal value function 
$
z(\bc) := \min_{\blambda \ge \bzero} 
  \left \{ 
    \bg^\top\blambda \colon (\bA\bPhi)^\top\blambda=\bPhi^\top\bc 
  \right \}
$
is well defined, and the minimum-Euclidean-norm element
$\overline{\blambda}(\bc)$ exists and is unique. Moreover, the linear program \eqref{eq:alp_dual_expanded} has finitely many bases and only its right-hand side depends affinely on $\bc$. Thus,
its optimal value is a continuous piecewise-affine function of $\bc$. 

\edits{Partition $\Delta(\mathcal S)$ into finitely many polyhedral regions $\Delta_1(\mathcal S),\ldots,\Delta_M(\mathcal S)$ with $M<\infty$, such that $z(\bc)$ is affine on each region. Recall that from \eqref{eq:min_norm_dual}, we have that }
\begin{equation*}
\begin{aligned}
\overline{\blambda}(\bc) 
= 
\argmin_{\blambda} 
\left\{ 
  \frac{1}{2}\lVert\blambda\rVert_2^2
  \colon
  (\bA\bPhi)^\top\blambda=\bPhi^\top\bc, \,
  \blambda\geq\bzero,
  \,
  \bg^\top\blambda=z(\bc)
\right\}.
\end{aligned}
\end{equation*}
On each region $\Delta_i(\mathcal S)$, the function $z(\bc)$ is affine. Restricted to that region, the problem above is therefore a strictly convex multiparametric quadratic program with affine dependence on $\bc$. Its unique optimizer is continuous and piecewise affine on $\Delta_i(\mathcal S)$ \citep{bemporad2002explicit}. At boundaries shared by two regions, the corresponding solutions agree because both solve the same minimum-norm problem and its optimizer is unique. Thus,
$\overline{\blambda}(\bc)$ is continuous and piecewise affine on the
polytope $\Delta(\mathcal S)$ and is therefore globally Lipschitz.
Hence, a constant satisfying the stated inequality exists.
\hfill $\blacksquare$
\end{proof}

\smallskip
However, dual stability alone does not guarantee stable weight updates in \ac{DIA}, because the dual solution is converted into a policy. The next lemma shows that the target-guided softmax construction controls the sensitivity of this policy via $\tau$ and $\epsilon$ for any input duals. In particular, we write $\blambda_x:=\bigl(\lambda(x,a)\bigr)_{a\in\mathcal A(x)}$ for the subvector of dual variables associated with state $x \in \mathcal S$.
\begin{lemma}[Policy Stability.]
\label{lemma:L_Lipschitz_policy}
There exist parameters $\epsilon, \tau > 0$ such that, for any two dual feasible solutions $\blambda, \blambda'$ to \eqref{eq:alp_dual_expanded} and a prespecified target policy $q$,
\[
\lVert \pi_{\tau,\epsilon,q}(\cdot \mid x;\blambda) - \pi_{\tau,\epsilon,q}(\cdot\mid x; \blambda') \rVert_1\le \, L_p \lVert \blambda_x - \blambda'_x \rVert_1, 
\;\; \forall x \in\mathcal S, 
\]
for some constant $L_p < (1-\alpha)/(L_d\alpha)$. Specifically, it suffices to pick any $\epsilon, \tau > 0$ such that  
\begin{equation}
\label{eq:epsilon_tau_conditions}
  \epsilon
  >
  \frac{2\alpha L_d}{1-\alpha},
  \qquad
  \tau
  >
  \frac{\alpha L_d\epsilon}
  {2\left[(1-\alpha)\epsilon-2\alpha L_d\right]}.
\end{equation}
\end{lemma}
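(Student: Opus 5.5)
Fix $x\in\mathcal S$ and write $s=\sum_{b}\lambda(x,b)$, $s'=\sum_b\lambda'(x,b)$, $w(s)=s/(s+\epsilon)$, and let $\sigma_\tau(\blambda_x)$ be the softmax vector with components $\exp(\lambda(x,a)/\tau)/\sum_{a'}\exp(\lambda(x,a')/\tau)$. The policy is then $\pi=w(s)\,\sigma_\tau(\blambda_x)+(1-w(s))\,q(\cdot\mid x)$. Add and subtract $w(s)\sigma_\tau(\blambda'_x)$ to get
\[
\pi-\pi' = w(s)\bigl[\sigma_\tau(\blambda_x)-\sigma_\tau(\blambda'_x)\bigr] + \bigl(w(s)-w(s')\bigr)\bigl[\sigma_\tau(\blambda'_x)-q(\cdot\mid x)\bigr].
\]
I will bound each of the two terms in $\ell_1$ separately.

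For the first term, use $w(s)\le 1$. The softmax Jacobian is $\tau^{-1}(\mathrm{diag}(p)-pp^\top)$. For any direction $h$, its image is $p\odot(h-p^\top h)/\tau$, and its $\ell_1$ norm equals $\tau^{-1}\sum_a p_a|h_a-\bar h|$ with $\bar h=p^\top h$. Since $\bar h$ is a $p$-average of the entries of $h$, this is at most $\tau^{-1}\max_{a,b}|h_a-h_b|\le \tau^{-1}\|h\|_1$. A sharper estimate should give the factor $1/(2\tau)$: the mean absolute deviation of a bounded variable is at most half its range. Integrating along the segment from $\blambda'_x$ to $\blambda_x$ then gives $\|\sigma_\tau(\blambda_x)-\sigma_\tau(\blambda'_x)\|_1\le \frac{1}{2\tau}\|\blambda_x-\blambda'_x\|_1$.

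For the second term, $\|\sigma_\tau-q\|_1\le 2$. Also $|w(s)-w(s')|=\epsilon|s-s'|/((s+\epsilon)(s'+\epsilon))\le |s-s'|/\epsilon\le \|\blambda_x-\blambda'_x\|_1/\epsilon$, using $s,s'\ge0$. Combining the two terms yields
\[
L_p=\frac{1}{2\tau}+\frac{2}{\epsilon}.
\]

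It remains to check that \eqref{eq:epsilon_tau_conditions} forces $L_p<(1-\alpha)/(\alpha L_d)$. This is equivalent to
\[
\frac{1}{2\tau}<\frac{1-\alpha}{\alpha L_d}-\frac{2}{\epsilon}=\frac{(1-\alpha)\epsilon-2\alpha L_d}{\alpha L_d\epsilon}.
\]
The right-hand side is positive exactly when $\epsilon>2\alpha L_d/(1-\alpha)$. Rearranging then gives $\tau>\alpha L_d\epsilon/\bigl(2[(1-\alpha)\epsilon-2\alpha L_d]\bigr)$, which matches the stated condition. This agreement confirms the target constants $1/(2\tau)$ and $2/\epsilon$.

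The main obstacle is getting the sharp softmax constant $1/(2\tau)$ rather than $1/\tau$. I plan to prove $\sum_a p_a|h_a-\bar h|\le\frac12(\max h-\min h)$ for a $p$-average $\bar h$ as follows. Writing $R=\max h-\min h$, the quantity $\sum_a p_a|h_a-\bar h|$ is convex in $h$ for fixed $\bar h$. It is therefore maximized by placing the mass of $p$ at the two extremes $\max h$ and $\min h$, where it equals $2\rho(1-\rho)R\le R/2$ with $\rho$ the mass at one extreme. Together with $\max_{a,b}|h_a-h_b|\le\|h\|_1$, this gives the bound. If that step fails, the fallback is to accept $L_p=1/\tau+2/\epsilon$ under a correspondingly doubled $\tau$ threshold. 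Nothing here uses dual feasibility beyond nonnegativity of $\blambda$, which is all the $w$ bound needs.
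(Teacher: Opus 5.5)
Your proof is correct and is essentially the paper's: the same split $\pi-\pi'=\omega(\sigma-\sigma')+(\omega-\omega')(\sigma'-q_x)$, the same bounds $1/(2\tau)$ and $2/\epsilon$, and the same rearrangement of the conditions on $\epsilon,\tau$. The only difference is that the paper gets $1/(2\tau)$ directly as the largest column sum of the softmax Jacobian, $\frac{2}{\tau}\sigma_\tau(b)\bigl(1-\sigma_\tau(b)\bigr)\le\frac{1}{2\tau}$, which is your $\tau^{-1}\sum_a p_a|h_a-\bar h|$ at $h=e_b$; since that expression is convex in $h$ (with $\bar h=p^\top h$ linear), its maximum over the $\ell_1$ unit ball is attained at $\pm e_b$, so your mean-absolute-deviation detour is unnecessary.
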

\begin{proof}{Proof of Lemma \ref{lemma:L_Lipschitz_policy}.}
Let $x\in\mathcal S$ and $q_x=\bigl(q(a\mid x)\bigr)_{a\in\mathcal A(x)}$. For any $\blambda_x \ge 0$, define
\begin{equation*}
\sigma_\tau(a;\blambda_x)
=
\frac{\exp\!\left(\lambda(x,a)/\tau\right)}
{\sum_{b\in\mathcal A(x)}\exp\!\left(\lambda(x,b)/\tau\right)},
\quad
s(\blambda_x)=\sum_{a\in\mathcal A(x)}\lambda(x,a),
\quad
\omega_\epsilon(\blambda_x)=\frac{s(\blambda_x)}{s(\blambda_x)+\epsilon}.
\end{equation*}
Then, the smoothed policy \eqref{eq:target_guided_softmax_policy} becomes
$\pi_{\tau,\epsilon,q}(\cdot\mid x;\blambda)
=\omega_\epsilon(\blambda_x)\sigma_\tau(\blambda_x)
+[1-\omega_\epsilon(\blambda_x)]q_x$.

For any two actions $a,b\in\mathcal A(x)$, the derivative of the softmax term yields
\begin{equation*}
\frac{\partial\sigma_\tau(a;\blambda_x)}
{\partial\lambda(x,b)}
=
\frac{1}{\tau}\sigma_\tau(a;\blambda_x)
\bigl[\mathbf 1\{a=b\}-\sigma_\tau(b;\blambda_x)\bigr].
\end{equation*}
For each column $b$ of the Jacobian,
\begin{equation*}
\sum_{a\in\mathcal A(x)}
\left|
\frac{\partial\sigma_\tau(a;\blambda_x)}
{\partial\lambda(x,b)}
\right|
=
\frac{2}{\tau}\sigma_\tau(b;\blambda_x)
\bigl[1-\sigma_\tau(b;\blambda_x)\bigr]
\leq
\frac{1}{2\tau}.
\end{equation*}
The equality uses
$\sum_{a\neq b}\sigma_\tau(a;\blambda_x)
=1-\sigma_\tau(b;\blambda_x)$, and the inequality follows from
$p(1-p)\leq1/4$ for any scalar $p\in[0,1]$. Hence, the induced matrix $1$-norm of
the softmax Jacobian is at most $1/(2\tau)$. Applying the integral form
of the mean-value theorem along the line segment between $\blambda_x$
and $\blambda'_x$ therefore gives
$\|\sigma_\tau(\blambda_x)-\sigma_\tau(\blambda'_x)\|_1
\leq\|\blambda_x-\blambda'_x\|_1/(2\tau)$.

Similarly, the derivative of the ratio $s/(s+\epsilon)$ for any scalar $s\geq0$ is bounded by
$1/\epsilon$. Since
$|s(\blambda_x)-s(\blambda'_x)|
\leq\|\blambda_x-\blambda'_x\|_1$, it follows that
$|\omega_\epsilon(\blambda_x)-\omega_\epsilon(\blambda'_x)|
\leq\|\blambda_x-\blambda'_x\|_1/\epsilon$.

Let $\omega=\omega_\epsilon(\blambda_x)$,
$\omega'=\omega_\epsilon(\blambda'_x)$,
$\sigma=\sigma_\tau(\blambda_x)$, and
$\sigma'=\sigma_\tau(\blambda'_x)$. Since
$\pi(\blambda_x)-\pi(\blambda'_x)
=\omega(\sigma-\sigma')+(\omega-\omega')(\sigma'-q_x)$,
$0\leq\omega\leq1$, and $\|\sigma'-q_x\|_1\leq2$, the preceding
bounds yield
\begin{equation*}
\left\|
\pi_{\tau,\epsilon,q}(\cdot\mid x;\blambda)
-\pi_{\tau,\epsilon,q}(\cdot\mid x;\blambda')
\right\|_1
\leq
\left(\frac{1}{2\tau}+\frac{2}{\epsilon}\right)
\left\|\blambda_x-\blambda'_x\right\|_1.
\end{equation*}
Thus, we may take $L_p=1/(2\tau)+2/\epsilon$. Under
\eqref{eq:epsilon_tau_conditions},
$1/(2\tau)< (1-\alpha)/(\alpha L_d)-2/\epsilon$, and hence
$L_p<(1-\alpha)/(\alpha L_d)$, as required.
$\hfill \blacksquare$
\end{proof}

We next show our main results. First, our primary objective is to establish weights $\bc$ that agree with the normalized discounted occupancy of the policy generated by that vector. For each $\bc\in\Delta(\mathcal S)$, let $\overline{\blambda}(\bc)$ be the minimum-norm dual solution from \eqref{eq:min_norm_dual}, and let $\pi_{\bc}:=\pi_{\tau,\epsilon,q}(\cdot\mid\cdot;\overline{\blambda}(\bc))$ be the corresponding policy with $\epsilon, \tau$ satisfying Lemma \ref{lemma:L_Lipschitz_policy}. Define the mapping $F:\Delta(\mathcal S)\to\Delta(\mathcal S)$ by 
$
F(\bc)
=
\bmu_{\pi_{\bc},\bnu},
$ 
where $\bmu_{\pi_{\bc},\bnu}$ is the normalized discounted occupancy measure as in \eqref{eq:discounted_occupancy}.
Equivalently,
\begin{equation}
F(\bc)^\top
=
(1-\alpha)\bnu^\top(I-\alpha P_{\pi_{\bc}})^{-1},
\end{equation}
where $P_{\pi_{\bc}}$ denotes the transition matrix induced by $\pi_{\bc}$, with entries $P_{\pi_{\bc}}(x,y)=\sum_{a\in\mathcal A(x)}\pi_{\bc}(a\mid x)P(y\mid x,a)$. 
The following theorem shows that at least one such vector exists, so the target of the \ac{DIA} iteration is well defined.

\begin{theorem}[Self-Consistent Weights]
\label{thm:self_consistent_weights}
There exists $\bc^\star\in \Delta(\mathcal{S})$ such that $F(\bc^\star)=\bc^\star$.
\end{theorem}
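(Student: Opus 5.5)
We will prove the theorem with Brouwer's fixed-point theorem applied to $F$ on the simplex $\Delta(\mathcal S)$. The simplex is nonempty, convex and compact. It therefore suffices to check two things: $F$ maps $\Delta(\mathcal S)$ into itself, and $F$ is continuous.

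\textbf{Self-map.} For any stationary randomized policy $\pi$, the vector $\bmu_{\pi,\bnu}$ in \eqref{eq:discounted_occupancy} is nonnegative. Since each $P_\pi^t$ is row-stochastic, we have $\sum_x \mu_{\pi,\bnu}(x)=(1-\alpha)\sum_t\alpha^t=1$. We must also confirm that $\pi_{\bc}$ is a genuine stochastic policy at every state. Each $\pi_{\tau,\epsilon,q}(\cdot\mid x;\overline{\blambda}(\bc))$ is a convex combination, with weight $\omega_\epsilon\in[0,1)$, of a softmax distribution and the distribution $q_x$. It is therefore a probability vector for any $\overline{\blambda}(\bc)\ge\bzero$. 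The minimum-norm dual $\overline{\blambda}(\bc)$ exists for every $\bc\in\Delta(\mathcal S)$ by the argument in the proof of Lemma~\ref{lemma:continuous_dual_selection_v2}. Hence $F(\bc)\in\Delta(\mathcal S)$.

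\textbf{Continuity.} We argue through the chain $\bc\mapsto\overline{\blambda}(\bc)\mapsto\pi_{\bc}\mapsto P_{\pi_{\bc}}\mapsto F(\bc)$.
\begin{enumerate}
\item By Lemma~\ref{lemma:continuous_dual_selection_v2}, the first map is $L_d$-Lipschitz in $\ell_1$.
\item By Lemma~\ref{lemma:L_Lipschitz_policy}, the per-state policy is $L_p$-Lipschitz in $\blambda_x$. Summing over $x$ gives $\max_x\|\pi_{\bc}(\cdot\mid x)-\pi_{\bc'}(\cdot\mid x)\|_1\le L_pL_d\|\bc-\bc'\|_1$. For mere continuity we do not even need the parameter conditions \eqref{eq:epsilon_tau_conditions}, since the softmax and $s/(s+\epsilon)$ are smooth for any $\tau,\epsilon>0$.
\item The map $\pi\mapsto P_\pi$ is linear, with $\|P_\pi-P_{\pi'}\|_\infty\le\max_x\|\pi(\cdot\mid x)-\pi'(\cdot\mid x)\|_1$.
\item The map $P\mapsto(1-\alpha)\bnu^\top(I-\alpha P)^{-1}$ is continuous on row-stochastic matrices. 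Indeed, $I-\alpha P$ is invertible with $\|(I-\alpha P)^{-1}\|_\infty\le 1/(1-\alpha)$, and matrix inversion is continuous on invertible matrices. Concretely, we can use the resolvent identity $(I-\alpha P)^{-1}-(I-\alpha P')^{-1}=\alpha(I-\alpha P)^{-1}(P-P')(I-\alpha P')^{-1}$, which gives an explicit Lipschitz bound of order $\alpha/(1-\alpha)^2$ (times $1-\alpha$).
\end{enumerate}
Composing these maps shows that $F$ is continuous, and in fact Lipschitz. Brouwer's theorem then yields $\bc^\star$ with $F(\bc^\star)=\bc^\star$.

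The only delicate step is the well-definedness and continuity of the selection $\bc\mapsto\overline{\blambda}(\bc)$, because LP optimal sets can jump. That step is exactly what Lemma~\ref{lemma:continuous_dual_selection_v2} supplies, so we take it as given. The remaining work consists of routine checks of the norm conventions when passing from per-state policy bounds to transition matrices. We note that this argument gives existence only. Uniqueness would need the contraction constant $\alpha L_pL_d/(1-\alpha)<1$ guaranteed by \eqref{eq:epsilon_tau_conditions}, which we do not need here.
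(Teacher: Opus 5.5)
Your proposal is correct and follows the same route as the paper. Both apply Brouwer's theorem on $\Delta(\mathcal S)$, get continuity of $F$ by chaining Lemma~\ref{lemma:continuous_dual_selection_v2}, Lemma~\ref{lemma:L_Lipschitz_policy}, linearity of $\pi\mapsto P_\pi$ and continuity of $(I-\alpha P)^{-1}$, and use the fact that normalized discounted occupancies lie in the simplex. Your extras (the resolvent-identity Lipschitz bound, and the remark that \eqref{eq:epsilon_tau_conditions} is not needed for mere continuity) are correct; the one nit is that step~2 follows from $\|\overline{\blambda}_x(\bc)-\overline{\blambda}_x(\bc')\|_1\le\|\overline{\blambda}(\bc)-\overline{\blambda}(\bc')\|_1$, not from summing over $x$.
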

\begin{proof}{Proof of Theorem~\ref{thm:self_consistent_weights}.}
Recall that
$\pi_{\bc}:=\pi_{\tau,\epsilon,q}(\cdot\mid\cdot;
\overline{\blambda}(\bc))$ is the target-guided softmax policy in
\eqref{eq:target_guided_softmax_policy} constructed from the
minimum-norm dual solution associated with $\bc$.
Lemma~\ref{lemma:continuous_dual_selection_v2} implies that
$\overline{\blambda}(\bc)$ is continuous in $\bc$. Lemma~\ref{lemma:L_Lipschitz_policy}
and the definition of $\pi_{\bc}$ then imply that, for every
$x\in\mathcal S$ and $a\in\mathcal A(x)$, the action probability
$\pi_{\bc}(a\mid x)$ is continuous in $\bc$. 

Because each entry of
$P_{\pi_{\bc}}$ is a finite linear combination of the action
probabilities, every entry of $P_{\pi_{\bc}}$ is also continuous in
$\bc$. Moreover,
$P_{\pi_{\bc}}$ is stochastic and $\alpha\in(0,1)$, so
$I-\alpha P_{\pi_{\bc}}$ is nonsingular. Continuity of matrix inversion
therefore implies that $F$ is continuous. By \eqref{eq:discounted_occupancy}, $F(\bc)$ is a normalized discounted
occupancy vector and hence belongs to $\Delta(\mathcal S)$ for every
$\bc\in\Delta(\mathcal S)$. Thus, $F$ is continuous and maps the
nonempty, compact, and convex set $\Delta(\mathcal S)$ into itself. Brouwer's
fixed-point theorem therefore guarantees a vector
$\bc^\star\in\Delta(\mathcal S)$ satisfying
$F(\bc^\star)=\bc^\star$.
$\hfill\blacksquare$
\end{proof}

\smallskip

\edits{Existence alone does not guarantee a unique fixed point or that successive weight updates will converge. By combining dual and policy stability, Theorem~\ref{thm:convergence_v2} establishes that the self-consistent weight vector is unique and that \ac{DIA} converges to it geometrically from any initial weights.}

\begin{theorem}[Convergence]
\label{thm:convergence_v2}
For any step size $\eta\in(0,1]$, the update rule
\begin{equation}
\label{eq:contraction_modulus}
T_\eta(\bc) := (1-\eta)\bc+\eta F(\bc)
\end{equation}
is a contraction on $\Delta(\mathcal S)$ in the $1$-norm, with modulus $\gamma_\eta:=1-\eta+\eta\alpha L_pL_d/(1-\alpha)<1$. Thus, $F$ has a unique fixed point $\bc^\star$, and the \ac{DIA} iterates satisfy
\begin{equation*}
\|\bc^{(m)}-\bc^\star\|_1
\leq
\gamma_\eta^m\|\bc^{(0)}-\bc^\star\|_1,
\qquad m=0,1,2,\ldots.
\end{equation*}
\end{theorem}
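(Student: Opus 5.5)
The plan is to show that $F$ itself is Lipschitz on $\Delta(\mathcal S)$ in the $1$-norm with constant $\alpha L_pL_d/(1-\alpha)$, which is $<1$ by Lemma~\ref{lemma:L_Lipschitz_policy}. The remaining claims then follow from the Banach fixed-point theorem. Indeed, for $\bc,\bc'\in\Delta(\mathcal S)$,
\[
\|T_\eta(\bc)-T_\eta(\bc')\|_1\le (1-\eta)\|\bc-\bc'\|_1+\eta\|F(\bc)-F(\bc')\|_1\le\gamma_\eta\|\bc-\bc'\|_1 .
\]
Since $T_\eta$ maps the simplex into itself as a convex combination of probability vectors, and $\Delta(\mathcal S)$ is closed in a complete space, $T_\eta$ has a unique fixed point. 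The fixed points of $T_\eta$ coincide with those of $F$ because $\eta>0$, so the fixed point from Theorem~\ref{thm:self_consistent_weights} is unique. The geometric rate follows by iterating the contraction from $\bc^{(0)}$ and using $\bc^\star=T_\eta(\bc^\star)$.

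The key step is a perturbation bound for discounted occupancies. Write $\pi=\pi_{\bc}$, $\pi'=\pi_{\bc'}$, $\bmu^\top=F(\bc)^\top=(1-\alpha)\bnu^\top(I-\alpha P_\pi)^{-1}$, and similarly $\bmu'$. Use the resolvent identity:
\[
\bmu^\top-\bmu'^\top=\alpha\,\bmu^\top(P_\pi-P_{\pi'})(I-\alpha P_{\pi'})^{-1}.
\]
Because $(I-\alpha P_{\pi'})^{-1}=\sum_t\alpha^tP_{\pi'}^t$ has nonnegative entries and row sums $1/(1-\alpha)$, right-multiplication by it inflates the $1$-norm of a row vector by at most $1/(1-\alpha)$. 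Hence
\[
\|\bmu-\bmu'\|_1\le\frac{\alpha}{1-\alpha}\sum_x\mu(x)\Bigl\|\sum_a(\pi(a\mid x)-\pi'(a\mid x))P(\cdot\mid x,a)\Bigr\|_1\le\frac{\alpha}{1-\alpha}\sum_x\mu(x)\|\pi(\cdot\mid x)-\pi'(\cdot\mid x)\|_1 .
\]

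The main obstacle is converting this into $\|\bc-\bc'\|_1$ with exactly the constant $L_pL_d$. Applying Lemma~\ref{lemma:L_Lipschitz_policy} statewise gives
\[
\sum_x\mu(x)\,L_p\,\|\overline{\blambda}_x(\bc)-\overline{\blambda}_x(\bc')\|_1 .
\]
Since $\mu(x)\le1$, this is at most
\[
L_p\sum_x\|\overline{\blambda}_x(\bc)-\overline{\blambda}_x(\bc')\|_1=L_p\|\overline{\blambda}(\bc)-\overline{\blambda}(\bc')\|_1\le L_pL_d\|\bc-\bc'\|_1
\]
by Lemma~\ref{lemma:continuous_dual_selection_v2}. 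The weighting by $\mu$ is simply discarded; this crude step is what matches the stated modulus. Combining the displays yields $\|F(\bc)-F(\bc')\|_1\le\alpha L_pL_d\|\bc-\bc'\|_1/(1-\alpha)$, completing the argument.
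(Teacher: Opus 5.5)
Your proposal is correct and follows the paper's proof essentially step for step. Both arguments use the resolvent identity for $F(\bc)-F(\bc')$, bound the nonnegative resolvent by $1/(1-\alpha)$, compose Lemma~\ref{lemma:L_Lipschitz_policy} statewise with Lemma~\ref{lemma:continuous_dual_selection_v2}, and finish with Banach's theorem for $T_\eta$. The only cosmetic difference is in one intermediate bound: the paper passes through the row-sum norm $\max_x\|P_{\pi_{\bc}}(x,\cdot)-P_{\pi_{\bc'}}(x,\cdot)\|_1$ and then uses $\max_x\|\overline{\blambda}_x(\bc)-\overline{\blambda}_x(\bc')\|_1\le\|\overline{\blambda}(\bc)-\overline{\blambda}(\bc')\|_1$, whereas you keep the $\bmu$-weighted sum and drop the weights via $\mu(x)\le 1$; both give the same modulus.
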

\begin{proof}{Proof of Theorem~\ref{thm:convergence_v2}.}
Fix $\bc^1,\bc^2\in\Delta(\mathcal S)$. For $i\in\{1,2\}$, let
$\overline{\blambda}^{,i}:=\overline{\blambda}(\bc^i)$,
$\pi_i:=\pi_{\bc^i}$, $P_i:=P_{\pi_i}$, and
$F_i:=F(\bc^i)$. We use the matrix norm
$\|M\|_\infty:=\max_x\sum_y|M(x,y)|$. For each state $x$, the
definition of $P_i$ and the triangle inequality give
$
\|P_1(x,\cdot)-P_2(x,\cdot)\|_1
\leq
\|\pi_1(\cdot\mid x)-\pi_2(\cdot\mid x)\|_1 
\leq
L_p\|\overline{\blambda}^{,1}_x-
\overline{\blambda}^{,2}_x\|_1.
$
Applying Lemma~\ref{lemma:continuous_dual_selection_v2} therefore yields
\begin{equation}
\label{eq:structural_transition_bound}
\|P_1-P_2\|_\infty
\leq
L_pL_d\|\bc^1-\bc^2\|_1.
\end{equation}

The resolvent identity and the definition of $F$ give
\begin{equation}
\label{eq:F_difference_expansion}
(F_1-F_2)^\top
=
\alpha F_1^\top(P_1-P_2)(I-\alpha P_2)^{-1}.
\end{equation}
Because $F_1$ is a probability vector,
$\|F_1^\top(P_1-P_2)\|_1\leq\|P_1-P_2\|_\infty$. Moreover, $P_2$ is
stochastic, so
$(I-\alpha P_2)^{-1}=\sum_{t=0}^\infty\alpha^tP_2^t$ is nonnegative and
has matrix norm $\|(I-\alpha P_2)^{-1}\|_\infty=1/(1-\alpha)$.
Combining these observations with
\eqref{eq:structural_transition_bound}--\eqref{eq:F_difference_expansion}
gives
\begin{equation}
\label{eq:compact_contraction_chain}
\|F_1-F_2\|_1
\leq
\frac{\alpha}{1-\alpha}\|P_1-P_2\|_\infty
\leq
\frac{\alpha L_pL_d}{1-\alpha}\|\bc^1-\bc^2\|_1.
\end{equation}

Define $\kappa:=\alpha L_pL_d/(1-\alpha)$. The condition
$L_p<(1-\alpha)/(\alpha L_d)$ from
Lemma~\ref{lemma:L_Lipschitz_policy} implies that $\kappa<1$.
It follows that
\begin{equation*}
\|T_\eta(\bc^1)-T_\eta(\bc^2)\|_1
\leq
\bigl(1-\eta+\eta\kappa\bigr)
\|\bc^1-\bc^2\|_1
=
\gamma_\eta\|\bc^1-\bc^2\|_1.
\end{equation*}
Since $\kappa<1$ and $\eta\in(0,1]$, we have $\gamma_\eta<1$; hence,
$T_\eta$ is a contraction. Because $\bc$ and $F(\bc)$ belong to
$\Delta(\mathcal S)$, their convex combination $T_\eta(\bc)$ also
belongs to $\Delta(\mathcal S)$. The simplex is closed and therefore
complete under the $1$-norm, so the Banach fixed-point theorem gives a
unique fixed point and geometric convergence from every initial weight
vector. Finally, because $\eta>0$, $T_\eta(\bc)=\bc$ if
and only if $F(\bc)=\bc$, so this is also the unique fixed point of $F$.
$\hfill\blacksquare$
\end{proof}

\begin{remark}[Generalizations.]
\edits{The preceding results also highlight conditions needed to generalize the policy-generation and weight-update rules beyond \ac{DIA}. Specifically, the target-guided softmax policy can be replaced by any rule $U$ satisfying, uniformly over
states, $\|U(\blambda)(\cdot\mid x)-U(\blambda')(\cdot\mid x)\|_1
\leq L_U\|\blambda_x-\blambda'_x\|_1$ for some constant $L_U$. Because this condition also
implies continuity, the existence result in Theorem~\ref{thm:self_consistent_weights} remains valid. For the
convex-combination update in \eqref{eq:contraction_modulus}, the proof of Theorem~\ref{thm:convergence_v2} remains valid whenever
$L_F:=\alpha L_UL_d/(1-\alpha)<1$. }

\edits{More generally, consider an update $U(\bc,\bmu)$ that maps pairs of probability vectors into $\Delta(\mathcal S)$ satisfying
$\|U(\bc,\bmu)-U(\bc',\bmu')\|_1 \leq L_c\|\bc-\bc'\|_1+L_\mu\|\bmu-\bmu'\|_1$. The resulting iteration is a contraction if $L_c+L_\mu L_F<1$. 
To preserve the self-consistency interpretation of Theorem~\ref{thm:self_consistent_weights}, we require $U(\bc,\bmu)=\bc$ if and only if $\bc=\bmu$. Our update is a special case with $L_c=1-\eta$, $L_\mu=\eta$, and $U=\pi_{\tau,\epsilon,q}$. \hfill $\square$}
\end{remark}

\subsection{Performance Analysis} 
\label{sec:performance}
\edits{The convergence results establish that \ac{DIA} identifies a unique self-consistent weight vector, but they do not indicate how well the resulting greedy policy performs relative to the \ac{ALP} approximation. We therefore derive an a posteriori performance certificate for that policy.} The analysis first bounds policy loss using the Bellman residual and then decomposes this bound into three components: the residual weighted by $\bc$, the mismatch between $\bc$ and the discounted occupancy of the \ac{DIA} policy, and the disagreement between the \ac{DIA} policy and the greedy policy.

\edits{Define $\bJ^*:=(J^*(x))_{x\in\mathcal S}$ and 
let $\bJ^\pi$ be the value vector associated with the stationary randomized policy $\pi$. Given a coefficient vector $\br\in\mathbb R^K$, we can write $\bJ':=\bPhi\br$ for the corresponding value-function approximation and define its state-action Bellman residual by}
$
s_{\bJ'}(x,a)
:=
g(x,a)+\alpha\sum_{y\in\mathcal S}P(y\mid x,a)J'(y)
-J'(x).
$
For a policy $\pi$, let
$g_\pi(x):=\sum_{a\in\mathcal A(x)}\pi(a\mid x)g(x,a)$ and define
the corresponding residual vector by
$
\boldsymbol s_{\bJ'}^\pi
:=
\bg_\pi+\alpha P_\pi\bJ'-\bJ',
$
with
$
s_{\bJ'}^\pi(x)
=
\sum_{a\in\mathcal A(x)}\pi(a\mid x)
s_{\bJ'}(x,a).
$
\begin{theorem}[Residual Loss Certificate]
\label{thm:residual_certificate}
Let $\br$ be feasible for the \ac{ALP}, and set $\bJ'=\bPhi\br$.
Then, for any stationary randomized policy $\pi$,
$\boldsymbol s_{\bJ'}^\pi\geq\bzero$ and
\begin{equation}
\label{eq:residual_loss_certificate}
0
\leq
\bnu^\top(\bJ^\pi-\bJ^*)
\leq
\frac{1}{1-\alpha}
\bmu_{\pi,\bnu}^\top
\boldsymbol s_{\bJ'}^\pi.
\end{equation}
\end{theorem}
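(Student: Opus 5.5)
Three facts do all the work: ALP feasibility gives a nonnegative residual, the resolvent of $P_\pi$ turns the policy's Bellman equation into an exact identity for $\bJ^\pi-\bJ'$, and the standard ALP lower-bound property gives $\bJ'\le\bJ^*$.

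\emph{Step 1: nonnegative residual.} Feasibility of $\br$ in \eqref{eq:alp_primal_expanded} means $\bA\bPhi\br\le\bg$. Componentwise this is exactly $s_{\bJ'}(x,a)\ge 0$ for every $(x,a)\in\mathcal N$. Since $s^\pi_{\bJ'}(x)$ is a convex combination of these entries with weights $\pi(a\mid x)$, we get $\boldsymbol s^\pi_{\bJ'}\ge\bzero$.

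\emph{Step 2: the left inequality.} The bound $\bnu^\top(\bJ^\pi-\bJ^*)\ge 0$ holds because $\bJ^*$ is the minimum in \eqref{eq:optimal_value} over stationary randomized policies, so $\bJ^\pi\ge\bJ^*$ componentwise, and $\bnu\ge 0$.

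\emph{Step 3: the key identity.} The policy value satisfies $\bJ^\pi=\bg_\pi+\alpha P_\pi\bJ^\pi$. Subtracting the definition $\boldsymbol s^\pi_{\bJ'}=\bg_\pi+\alpha P_\pi\bJ'-\bJ'$ gives
\[
(I-\alpha P_\pi)(\bJ^\pi-\bJ')=\boldsymbol s^\pi_{\bJ'},
\]
so $\bJ^\pi-\bJ'=(I-\alpha P_\pi)^{-1}\boldsymbol s^\pi_{\bJ'}$. Multiplying on the left by $\bnu^\top$ and using $\bmu_{\pi,\bnu}^\top=(1-\alpha)\bnu^\top(I-\alpha P_\pi)^{-1}$, which follows from \eqref{eq:discounted_occupancy} via the Neumann series, yields
\[
\bnu^\top(\bJ^\pi-\bJ')=\frac{1}{1-\alpha}\,\bmu_{\pi,\bnu}^\top\boldsymbol s^\pi_{\bJ'}.
\]

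\emph{Step 4: replace $\bJ'$ by $\bJ^*$.} It remains to show $\bJ'\le\bJ^*$ componentwise; then $\bnu^\top(\bJ^\pi-\bJ^*)\le\bnu^\top(\bJ^\pi-\bJ')$ because $\bnu\ge 0$. This is the only step needing a separate argument, and it is standard \citep{de2003linear}.

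Let $\pi^*$ be an optimal deterministic policy. Feasibility gives $\bJ'\le\bg_{\pi^*}+\alpha P_{\pi^*}\bJ'$. The operator $\bJ\mapsto\bg_{\pi^*}+\alpha P_{\pi^*}\bJ$ is monotone because $P_{\pi^*}\ge 0$. Iterating the inequality $n$ times gives
\[
\bJ'\le\sum_{t=0}^{n-1}\alpha^tP_{\pi^*}^t\bg_{\pi^*}+\alpha^nP_{\pi^*}^n\bJ'.
\]
Letting $n\to\infty$ gives $\bJ'\le\bJ^{\pi^*}=\bJ^*$. Equivalently, multiply $(I-\alpha P_{\pi^*})\bJ'\le\bg_{\pi^*}$ by the nonnegative matrix $(I-\alpha P_{\pi^*})^{-1}$.

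Combining Steps 2–4 gives the two-sided bound in \eqref{eq:residual_loss_certificate}.
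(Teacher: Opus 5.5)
Your proof is correct and follows the paper's argument. The three ingredients match: nonnegativity of the residual from ALP feasibility, the resolvent identity $\bJ^\pi-\bJ'=(I-\alpha P_\pi)^{-1}\boldsymbol s^\pi_{\bJ'}$, and the sandwich $\bJ'\le\bJ^*\le\bJ^\pi$. The only difference is that you justify $\bJ'\le\bJ^*$ explicitly, by iterating the Bellman inequality along an optimal policy, whereas the paper simply asserts it as a consequence of feasibility.
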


\begin{proof}{Proof of Theorem~\ref{thm:residual_certificate}.}
Feasibility of $\br$ implies, componentwise, that
$\bJ'\leq\bJ^*$ and
$\boldsymbol s_{\bJ'}^\pi\geq\bzero$. Further,
\[
\begin{aligned}
\bJ^\pi-\bJ'
&=\bg_\pi+\alpha P_\pi\bJ^\pi-\bJ'
=\alpha P_\pi(\bJ^\pi-\bJ')
  +\bg_\pi+\alpha P_\pi\bJ'-\bJ'
=\alpha P_\pi(\bJ^\pi-\bJ')
  +\boldsymbol s_{\bJ'}^\pi.
\end{aligned}
\]
Rearranging gives
$\bJ^\pi-\bJ'
=(I-\alpha P_\pi)^{-1}
\boldsymbol s_{\bJ'}^\pi.$
Since $\bJ^*\le\bJ^\pi$ and $\bJ'\le\bJ^*$, it follows that
$0\le\bnu^\top(\bJ^\pi-\bJ^*)
\le\bnu^\top(\bJ^\pi-\bJ')
=\frac{1}{1-\alpha}\bmu_{\pi,\bnu}^\top
\boldsymbol s_{\bJ'}^\pi.
\hfill\blacksquare$
\end{proof}

\smallskip
\edits{Notice that Theorem~\ref{thm:residual_certificate} does not require knowledge of the optimal value function $\bJ^*$. Thus, the certificate can be evaluated using quantities available from the \ac{ALP} solution: a feasible coefficient vector $\br$, its value-function approximation $\bJ'=\bPhi\br$, the corresponding Bellman residual, and the discounted occupancy of the policy being evaluated. These quantities can be computed exactly for smaller models or estimated through simulation for larger problems. We next use this certificate to evaluate the greedy policy derived from the \ac{ALP} solution selected by \ac{DIA}.}

For each $\bc\in\Delta(\mathcal S)$, choose an optimal primal solution
$\br^*(\bc)$ and write
$\bJ'_{\bc}:=\bPhi\br^*(\bc)$. Let
$\overline{\blambda}_{\bc}:=\overline{\blambda}(\bc)$ be the selected
minimum-norm dual solution, and recall from
\eqref{eq:target_guided_softmax_policy} that
$\pi_{\bc}(a\mid x):=
\pi_{\tau,\epsilon,q}(a\mid x;\overline{\blambda}_{\bc})$ is the
corresponding target-guided softmax policy. Because $\br^*(\bc)$ and
$\overline{\blambda}_{\bc}$ are primal-dual optimal, they satisfy
complementary slackness. Let $u_{\bc}$ denote the greedy policy defined
in \eqref{eq:greedy_policy_from_alp}, evaluated using $\bJ'_{\bc}$.
We use $R(\bc):=\|F(\bc)-\bc\|_1$ 
and
$
D(\bc):=
\sum_{x\in\mathcal S}[F(\bc)]_x
\bigl[1-\pi_{\bc}(u_{\bc}(x)\mid x)\bigr]
$
to measure, respectively, the occupancy mismatch of the current weights
and the discounted disagreement between the \ac{DIA} policy and the
greedy policy. Finally, let
$m_{\bc}(x):=\sum_{a\in\mathcal A(x)}
\overline\lambda_{\bc}(x,a)$ denote the dual sum at state $x$.

\begin{corollary}[DIA Greedy-Policy Certificate]
\label{cor:greedy_certificate}
For every $\bc\in\Delta(\mathcal S)$,
\begin{equation}
\label{eq:greedy_certificate}
0\le\bnu^\top(\bJ^{u_{\bc}}-\bJ^*)
\le
\frac{
\bc^\top\boldsymbol s_{\bJ'_{\bc}}^{u_{\bc}}
+
\|\boldsymbol s_{\bJ'_{\bc}}^{u_{\bc}}\|_\infty
\left[
R(\bc)
+\dfrac{2\alpha D(\bc)}{1-\alpha}
\right]}
{1-\alpha}.
\end{equation}
\edits{At a fixed point, $R(\bc)=0$. Further,
$s_{\bJ'_{\bc}}^{u_{\bc}}(x)=0 \text{ when }m_{\bc}(x)>0$.
Thus,
$\bc^\top\boldsymbol s_{\bJ'_{\bc}}^{u_{\bc}} =
\sum_{x:m_{\bc}(x)=0}
c(x)s_{\bJ'_{\bc}}^{u_{\bc}}(x).$}
\end{corollary}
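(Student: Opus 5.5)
We will apply Theorem~\ref{thm:residual_certificate} to the greedy policy $\pi=u_{\bc}$ with $\bJ'=\bJ'_{\bc}$, and then move the weighting in the resulting bound from $\bmu_{u_{\bc},\bnu}$ back to $\bc$ in two steps. Since $\br^*(\bc)$ is feasible for the \ac{ALP}, Theorem~\ref{thm:residual_certificate} gives $\boldsymbol s:=\boldsymbol s_{\bJ'_{\bc}}^{u_{\bc}}\ge\bzero$ and
\[
0\le\bnu^\top(\bJ^{u_{\bc}}-\bJ^*)\le\tfrac{1}{1-\alpha}\bmu_{u_{\bc},\bnu}^\top\boldsymbol s .
\]
Because $\boldsymbol s\ge\bzero$, Hölder's inequality gives $\bmu_{u_{\bc},\bnu}^\top\boldsymbol s\le \bc^\top\boldsymbol s+\|\boldsymbol s\|_\infty\|\bmu_{u_{\bc},\bnu}-\bc\|_1$. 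The triangle inequality then splits the last norm as $\|\bmu_{u_{\bc},\bnu}-F(\bc)\|_1+R(\bc)$, using $F(\bc)=\bmu_{\pi_{\bc},\bnu}$.

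The key step is bounding the occupancy gap between the greedy policy and the \ac{DIA} policy by $2\alpha D(\bc)/(1-\alpha)$. We will reuse the resolvent identity from the proof of Theorem~\ref{thm:convergence_v2}, with $P_1=P_{\pi_{\bc}}$ and $P_2=P_{u_{\bc}}$:
\[
(F(\bc)-\bmu_{u_{\bc},\bnu})^\top=\alpha F(\bc)^\top(P_{\pi_{\bc}}-P_{u_{\bc}})(I-\alpha P_{u_{\bc}})^{-1}.
\]
The factor $\|(I-\alpha P_{u_{\bc}})^{-1}\|_\infty=1/(1-\alpha)$ is handled as before. For the middle factor, we will not use the crude $\|P_{\pi_{\bc}}-P_{u_{\bc}}\|_\infty$. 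Instead we bound $\|F(\bc)^\top(P_{\pi_{\bc}}-P_{u_{\bc}})\|_1$ row by row as $\sum_x[F(\bc)]_x\|P_{\pi_{\bc}}(x,\cdot)-P_{u_{\bc}}(x,\cdot)\|_1$. Each row term is at most $\|\pi_{\bc}(\cdot\mid x)-\delta_{u_{\bc}(x)}\|_1$. Since $\delta_{u_{\bc}(x)}$ is a point mass, this norm equals $2[1-\pi_{\bc}(u_{\bc}(x)\mid x)]$. Summing gives $2D(\bc)$, and substituting everything back yields \eqref{eq:greedy_certificate}. The care needed here is only in keeping the weighting $F(\bc)$ on the left, which is why we order the resolvent identity as $(F(\bc)-\bmu_{u_{\bc},\bnu})^\top$ so that $F(\bc)$ sits on the left of the difference.

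For the remaining claims, $R(\bc)=0$ at a fixed point by definition of $F$. For the residual statement, note that if $m_{\bc}(x)>0$, then some $\overline\lambda_{\bc}(x,a')>0$. Complementary slackness with $\br^*(\bc)$ then forces the constraint at $(x,a')$ to bind, so $s_{\bJ'_{\bc}}(x,a')=0$. Feasibility gives $s_{\bJ'_{\bc}}(x,a)\ge0$ for all $a$. The greedy action minimizes $g(x,a)+\alpha\sum_y P(y\mid x,a)J'_{\bc}(y)$, which is the same as minimizing $s_{\bJ'_{\bc}}(x,a)$, since the two differ only by $J'_{\bc}(x)$. Hence $s^{u_{\bc}}_{\bJ'_{\bc}}(x)=\min_a s_{\bJ'_{\bc}}(x,a)=0$, and the restriction of $\bc^\top\boldsymbol s$ to states with $m_{\bc}(x)=0$ follows. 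I expect no real obstacle beyond the row-wise occupancy step above.
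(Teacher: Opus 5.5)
Your proposal is correct and follows essentially the same route as the paper. Like the paper, you apply the residual certificate to $u_{\bc}$ and move the weighting to $\bc$ via H\"older and the triangle inequality. You then bound the occupancy gap with the resolvent identity weighted row-wise by $F(\bc)$ and the point-mass identity $\|\pi_{\bc}(\cdot\mid x)-\delta_{u_{\bc}(x)}\|_1=2[1-\pi_{\bc}(u_{\bc}(x)\mid x)]$, and finish with complementary slackness. The sign convention of your resolvent identity differs from the paper's, but this is immaterial.
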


\begin{proof}{Proof of Corollary~\ref{cor:greedy_certificate}.}
Applying Theorem~\ref{thm:residual_certificate} with
$\pi=u_{\bc}$ and adding and subtracting $\bc$ gives
\[
\begin{aligned}
\bnu^\top(\bJ^{u_{\bc}}-\bJ^*)
&\le
\frac{
\bc^\top\boldsymbol s_{\bJ'_{\bc}}^{u_{\bc}}
+
\|\boldsymbol s_{\bJ'_{\bc}}^{u_{\bc}}\|_\infty
\|\bmu_{u_{\bc},\bnu}-\bc\|_1}
{1-\alpha}.
\end{aligned}
\]
The triangle inequality yields
\(
\|\bmu_{u_{\bc},\bnu}-\bc\|_1
\le
R(\bc)
+\|\bmu_{u_{\bc},\bnu}-F(\bc)\|_1.
\)
The occupancy measures satisfy
$
(\bmu_{u_{\bc},\bnu}-F(\bc))^\top
=
\alpha F(\bc)^\top
(P_{u_{\bc}}-P_{\pi_{\bc}})
(I-\alpha P_{u_{\bc}})^{-1}.
$
Since the inverse is nonnegative with row sums $1/(1-\alpha)$ and
\(
\|P_{u_{\bc}}(x,\cdot)-P_{\pi_{\bc}}(x,\cdot)\|_1
\le
2\bigl[1-\pi_{\bc}(u_{\bc}(x)\mid x)\bigr],
\)
we obtain
\[
\|\bmu_{u_{\bc},\bnu}-F(\bc)\|_1
\le\frac{2\alpha D(\bc)}{1-\alpha}.
\]
Combining these bounds gives~\eqref{eq:greedy_certificate}. At a
\ac{DIA} fixed point, $F(\bc)=\bc$, and hence $R(\bc)=0$.
If $m_{\bc}(x)>0$, then
$\overline\lambda_{\bc}(x,a)>0$ for at least one action $a$.
Complementary slackness implies that the corresponding state-action
residual is zero. Because all state-action residuals are nonnegative and
$u_{\bc}$ minimizes them, it follows that
$s_{\bJ'_{\bc}}^{u_{\bc}}(x)=0$. Therefore, only states with
$m_{\bc}(x)=0$ can contribute to
$\bc^\top\boldsymbol s_{\bJ'_{\bc}}^{u_{\bc}}$, which proves
the final identity.
$\hfill\blacksquare$
\end{proof}

\smallskip
The bound in \eqref{eq:greedy_certificate} decomposes policy loss into a
weighted Bellman-residual term and two mismatch terms. The quantity
$\bc^\top\boldsymbol s_{\bJ'_{\bc}}^{u_{\bc}}$ measures the residual of the \ac{ALP} approximation over the states emphasized by $\bc$. It reflects the expressivity of the basis functions and choice of state-relevance weights. The term $R(\bc)$ measures
the mismatch between $\bc$ and the discounted occupancy of $\pi_{\bc}$,
while $D(\bc)$ measures the disagreement between $\pi_{\bc}$ and the
greedy policy $u_{\bc}$. At a \ac{DIA} fixed point,
$R(\bc)=0$, but $D(\bc)$ may be positive because the
fixed point is defined using the occupancy of $\pi_{\bc}$ rather than
the greedy policy $u_{\bc}$. Thus, the certificate is tighter when the Bellman residual is small, $\bc$ closely matches the occupancy of $\pi_{\bc}$, and $\pi_{\bc}$ places most of its probability on greedy actions.


\section{Computational Evaluation}
\label{sec:numerical_experiments}

\edits{We present a numerical study for two common settings: controlled queueing models (Section~\ref{sec:queueing}) and a healthcare scheduling application (Section~\ref{sec:scheduling}). In each, we compare alternative methods for selecting state-relevance weights and evaluate policy performance through simulation. Computational details are provided in Appendix~\ref{sec:numerical_details}. The results show that adaptive weight selection can substantially improve policy quality relative to heuristic weights while matching or improving upon primal-based methods at lower computational cost, particularly when the basis functions are sufficiently expressive.}

\subsection{Queueing System} 
\label{sec:queueing}
\edits{Controlled queueing problems provide standard benchmarks for evaluating state-relevance weights. To this end, we adapt and extend the instances of \citet{de2003linear} and \citet{de2008choosing} in which jobs arrive stochastically, the system state is defined by queue lengths, and the decision maker selects service rates or allocates servers subject to capacity constraints.}

For our analysis, we evaluate ad hoc state-relevance weights from the literature, \ac{PIA}, \ac{DIA}, and an additional Random-policy benchmark. Random selects an admissible action uniformly at random at each state, using the resulting discounted occupancy measure as the state-relevance weights. This benchmark allows us to isolate the value of systematically incorporating information from the simulated or computed policy, beyond the effect of occupancy-based weighting alone. The settings for the queueing experiments are described in detail in Appendix~\ref{app:queue}. Additional insights on discounted occupancy measure and convergence are included in Appendix \ref{app:convergence_insights}.

\paragraph{Single-Dimensional Queue.} We compare the fixed state-relevance weights proposed by \citet{de2003linear}, i.e., $c(x) = (1-\xi)\,\xi^x$ with $\xi = \{0.9,0.999\}$, against the adaptive weights produced by \ac{PIA} and \ac{DIA}. For the adaptive methods, we consider three initial state distributions $\nu(x)$ representing different levels for the initial amount of congestion:
    \begin{enumerate}
  \item \textbf{Geometric:} A low-congestion system favoring $\xi=0.9$ where $\nu(x) = (1-\xi)\,\xi^x$.
  \item \textbf{Empty-System:} An empty system such that $\nu(x) = 1$ for $x=0$ and zero otherwise.
    \item \textbf{Reverse-Geometric:} A high-congestion system favoring $\xi = 0.999$ where $\nu(x) = (1-\xi)\xi^{S-x}$.
\end{enumerate}

\edits{For states in which the dual solution assigns no mass, \ac{DIA} requires a target policy to determine the corresponding actions. We evaluate two variants: \ac{DIA} (Random), which uses a uniformly random target policy, and \ac{DIA} ($\xi=0.9$), which uses the approximate policy generated by the fixed-weight \ac{ALP} with $\xi=0.9$. Table~\ref{tab:policy_comparison_single_queue} reports discounted costs estimated from $10^6$ Monte Carlo episodes of length $T=343$, chosen so that $\alpha^T=0.98^{343}<10^{-3}$. Both \ac{PIA} and \ac{DIA} converge within 10 iterations.}

\begin{table}[ht]
\centering
\footnotesize
\caption{Policy comparison under three initial state distributions for the single-dimensional control queue.}
\begin{tabular}{lccc}
  \toprule
  & \multicolumn{3}{c}{\textbf{Discounted Costs ($\pm$95\% \ac{CI})}} \\
  \cmidrule(lr){2-4}
  \textbf{Policy} & \textbf{Geometric} & \textbf{Empty-System} & \textbf{Reverse-Geometric} \\
  \midrule
  Optimal policy                     & $391.23 \pm 0.76$  & $117.12 \pm 0.09$  & $2,496,721.59 \pm 1.03$ \\
  Fixed weights ($\xi=0.9$)          & $393.81 \pm 0.80$  & $117.12 \pm 0.09$  & $2,497,280.50 \pm 0.67$ \\
  Fixed weights ($\xi=0.999$)        & $517.26 \pm 0.74$  & $228.76 \pm 0.13$  & $2,496,725.38 \pm 1.04$ \\
  Policy induced by \ac{PIA}              & $393.77 \pm 0.79$  & $118.78 \pm 0.11$  & $2,496,725.38 \pm 1.04$ \\
  Policy induced by \ac{DIA} (Random)     & $394.89 \pm 0.81$  & $118.78 \pm 0.11$  & $2,496,725.38 \pm 1.04$ \\
  Policy induced by \ac{DIA} ($\xi=0.9$)  & $393.77 \pm 0.79$  & $118.78 \pm 0.11$  & $2,496,725.38 \pm 1.04$ \\
  Policy induced by Random  & $394.89 \pm 0.81$  & $118.78 \pm 0.11$  & $2,496,725.38 \pm 1.04$ \\
  \bottomrule
\end{tabular}
\label{tab:policy_comparison_single_queue}
\end{table}

Table~\ref{tab:policy_comparison_single_queue} shows that the effectiveness of a fixed weighting scheme depends on the initial state distribution. The fixed-weight policy with $\xi=0.9$, which emphasizes low-congestion states, performs nearly optimally under the \textit{Geometric} and \textit{Empty-System} distributions but performs poorly under \textit{Reverse-Geometric}. Conversely, the fixed-weight policy with $\xi=0.999$, which places greater emphasis on high-congestion states, underperforms under the \textit{Geometric} and \textit{Empty-System} distributions but is nearly optimal under \textit{Reverse-Geometric}. The adaptive approaches remain close to the best fixed-weight benchmark across all three initial distributions. The results also show how a fixed-weight policy can serve as an informative target within \ac{DIA}: when the dual solution assigns little or no mass to a state, the target policy supplies the corresponding action probabilities. Accordingly, \ac{DIA} ($\xi=0.9$) matches \ac{PIA} and performs weakly better than Random and \ac{DIA} (Random).

Figure~\ref{fig:combined_single_queue} compares the approximate and true cost-to-go functions, along with the actions associated with the induced policies. We observe that fixed-weight approaches concentrate approximation accuracy in different regions of the state space. Specifically, $\xi=0.9$ emphasizes low-congestion states, whereas $\xi=0.999$ emphasizes high-congestion states. In contrast, \ac{PIA} and \ac{DIA} (Random) adapt the weights toward states frequently visited under the induced policy, producing the greatest accuracy in regions where the policy is most likely to be concentrated.

\begin{figure}[htbp]
  \centering
  \begin{subfigure}[b]{0.33\textwidth}
    \includegraphics[width=\graphscaler\linewidth]{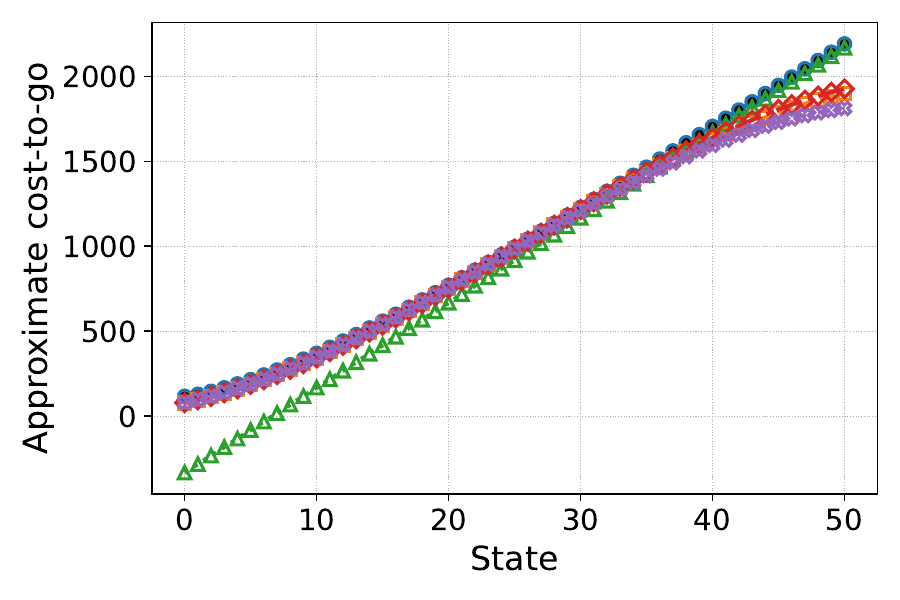}
    \caption{Approx.\ Cost-to-go: Geometric}
    \label{fig:app_cost_to_go_geo}
  \end{subfigure}\hfill
  \begin{subfigure}[b]{0.33\textwidth}
    \includegraphics[width=\graphscaler\linewidth]{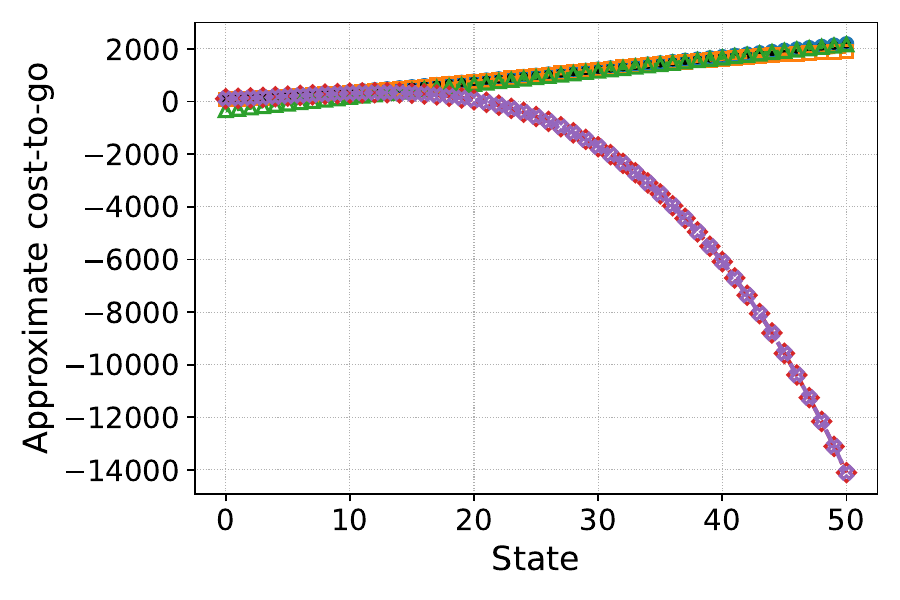}
    \caption{Approx.\ Cost-to-go: Empty}
    \label{fig:app_cost_to_go_empty}
  \end{subfigure}\hfill
  \begin{subfigure}[b]{0.33\textwidth}
    \includegraphics[width=\graphscaler\linewidth]{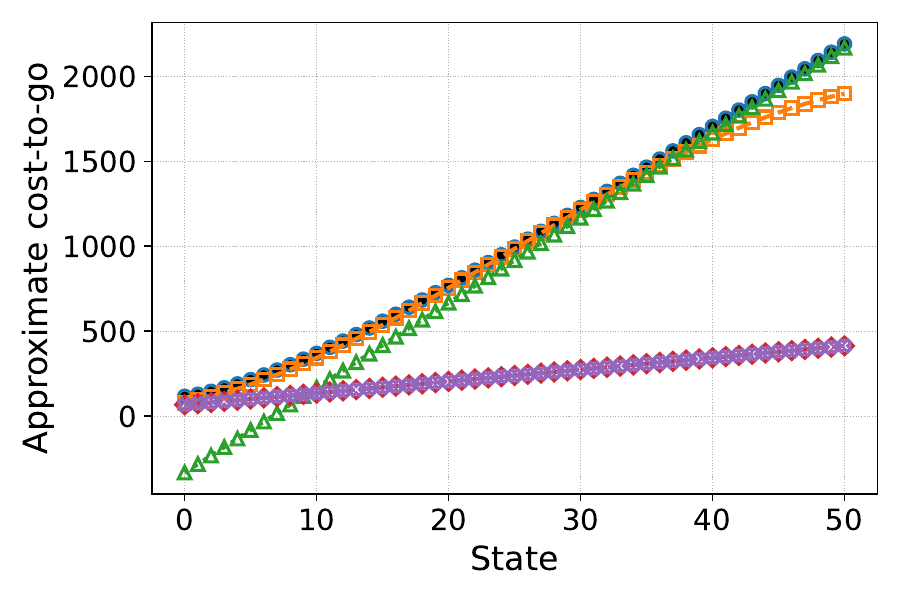}
    \caption{Approx. Cost-to-go: Rev. Geom}
    \label{fig:app_cost_to_go_rev}
  \end{subfigure}

  \vspace{1em}

  \begin{subfigure}[b]{0.33\textwidth}
    \includegraphics[width=\graphscaler\linewidth]{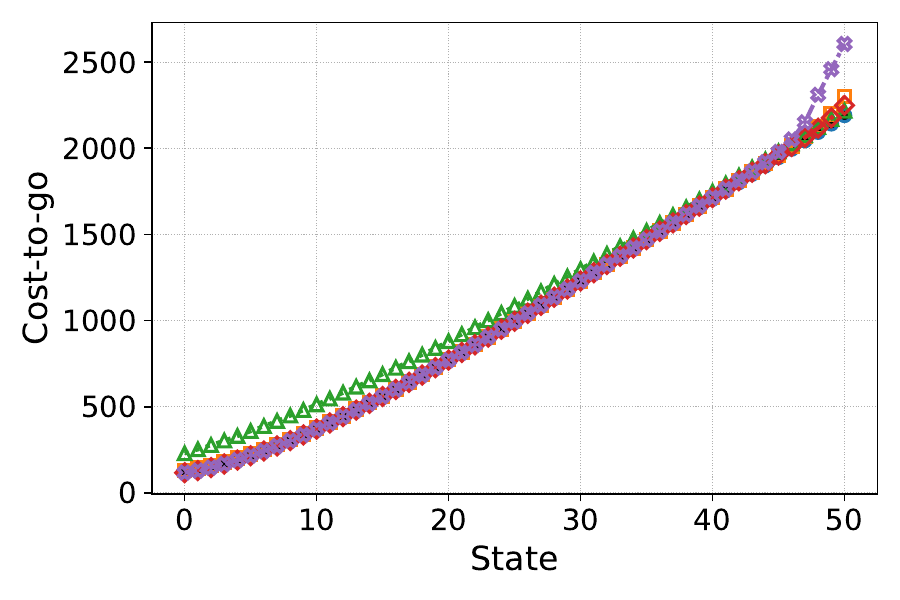}
    \caption{Cost-to-go: Geometric}
    \label{fig:cost_to_go_geo}
  \end{subfigure}\hfill
  \begin{subfigure}[b]{0.33\textwidth}
    \includegraphics[width=\graphscaler\linewidth]{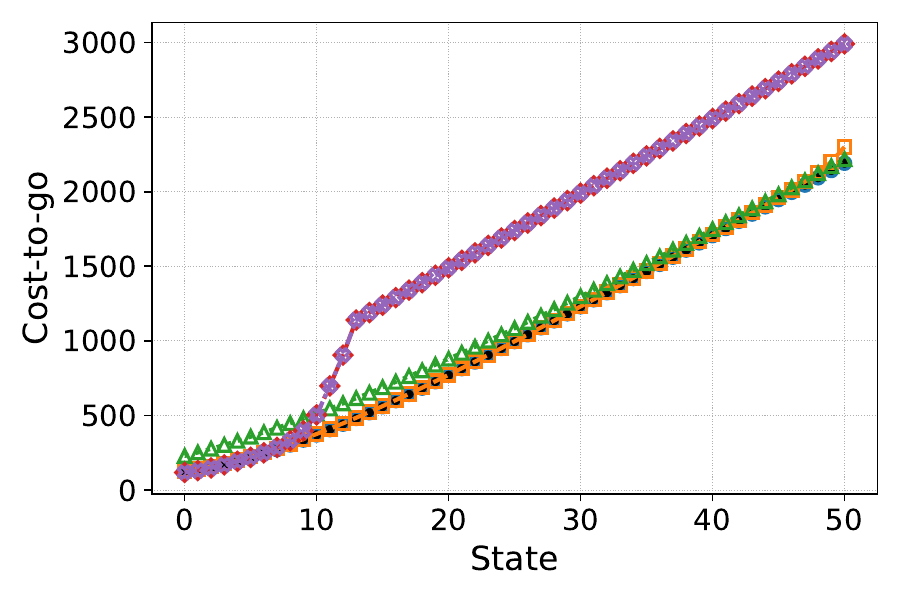}
    \caption{Cost-to-go: Empty}
    \label{fig:cost_to_go_empty}
  \end{subfigure}\hfill
  \begin{subfigure}[b]{0.33\textwidth}
    \includegraphics[width=\graphscaler\linewidth]{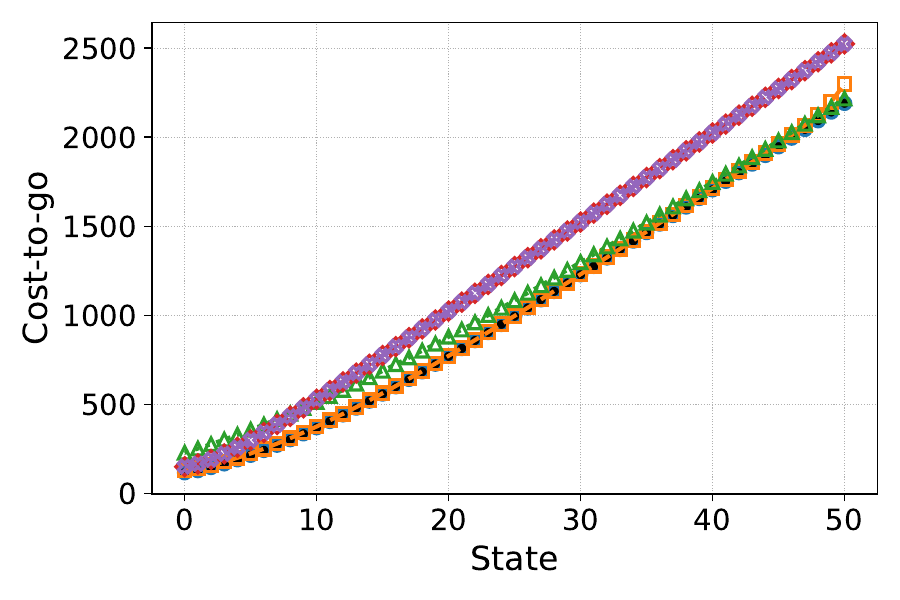}
    \caption{Cost-to-go: Reverse Geometric}
    \label{fig:cost_to_go_rev}
  \end{subfigure}

  \vspace{1em}

  \begin{subfigure}[b]{0.33\textwidth}
    \includegraphics[width=\graphscaler\linewidth]{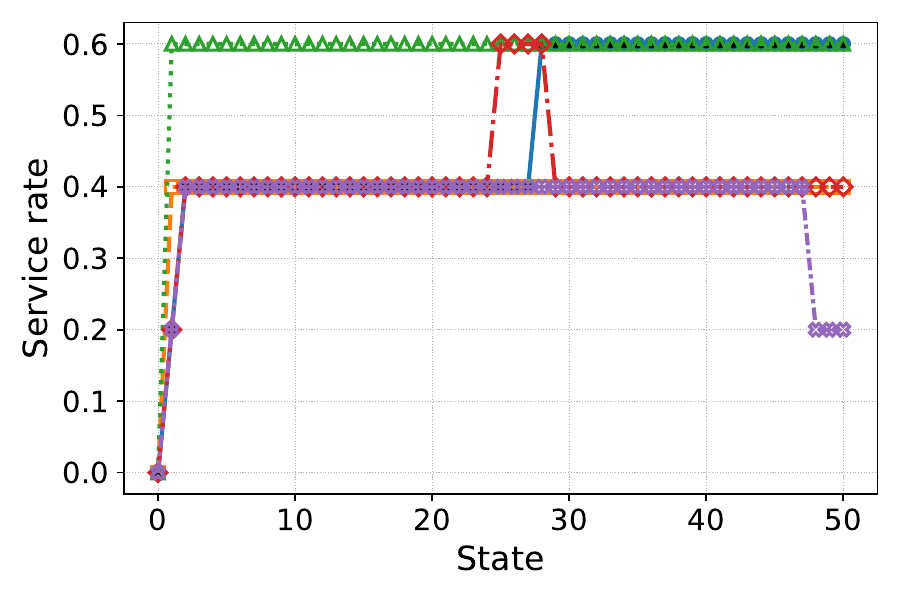}
    \caption{Actions: Geometric}
    \label{fig:action_geo}
  \end{subfigure}\hfill
  \begin{subfigure}[b]{0.33\textwidth}
    \includegraphics[width=\graphscaler\linewidth]{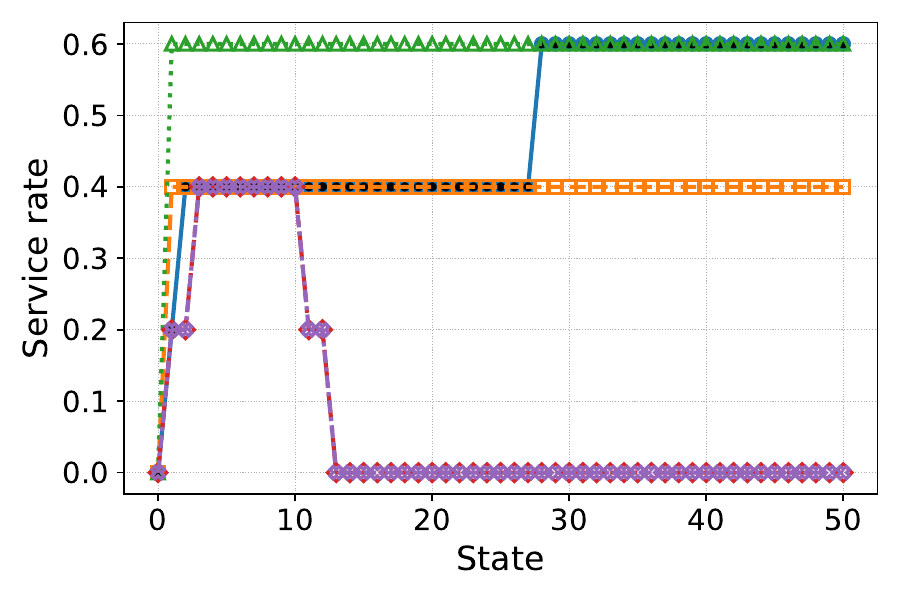}
    \caption{Actions: Empty}
    \label{fig:action_empty}
  \end{subfigure}\hfill
  \begin{subfigure}[b]{0.33\textwidth}
    \includegraphics[width=\graphscaler\linewidth]{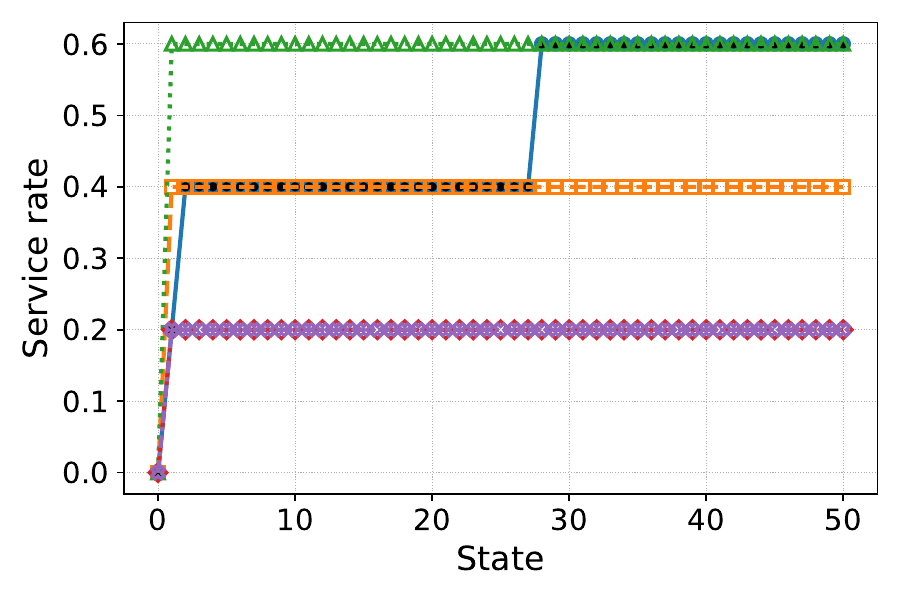}
    \caption{Actions: Reverse Geometric}
    \label{fig:action_rev}
  \end{subfigure}

  \begin{subfigure}[b]{0.8\textwidth}
    \centering
    \includegraphics[width=\graphscaler\linewidth]{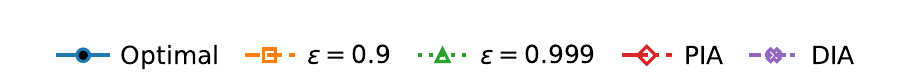}
  \end{subfigure}

  \caption{Comparison under three initial state distributions (Geometric, Empty, Reverse Geometric) where (a--c) is the approximate cost-to-go, (d--f) is the true cost-to-go, and (g--i) are the actions.}
  \label{fig:combined_single_queue}
\end{figure}

\smallskip
\paragraph{Eight-Dimensional Queue.} We next consider an eight-dimensional queueing network \citep{de2003linear, de2008choosing}; see Appendix~\ref{app:8d_queue} for a detailed description. Table~\ref{tab:policy_comparison_8d_queue} reports discounted costs and runtimes, where policy costs are estimated from $40,000$ Monte Carlo episodes of length $T=1{,}379$, chosen so that $\alpha^T = 0.995^{1379} < 10^{-3}$. In the sampled LP constraint set, the maximum total queue length is 193 and the maximum single-queue length is 92.

\begin{table}[ht]
\centering
\footnotesize
\caption{Policy comparison for the eight-dimensional control queue.}
\begin{tabular}{lrr}
  \toprule
  \textbf{Policy} & \textbf{Runtime (s)} & \textbf{Discounted Cost} \textbf{($\pm$95\% \ac{CI})} \\
  \midrule
  Fixed weights ($\xi=0.9$)   & ---      & $13,906.48 \pm 53.08$  \\
  Random baseline              & 1,574.53  & $13,380.16 \pm 50.77$  \\
  Policy induced by \ac{PIA}        & 13,751.61 & $13,458.07 \pm 50.68$  \\
  Policy induced by \ac{DIA} (Random) & 3,660.20  & $13,334.33 \pm 51.25$  \\
  Policy induced by \ac{DIA} ($\xi = 0.9$) & 14,588.35  & $13,195.70 \pm 49.82$   \\
  \bottomrule
\end{tabular}
\label{tab:policy_comparison_8d_queue}
\end{table}

The results show that fixed state-relevance weights are not uniformly reliable. Relative to the fixed-weight baseline ($\xi=0.9$), mean discounted cost decreases by (3.78\%) under the Random baseline, (3.22\%) under \ac{PIA}, (4.11\%) under \ac{DIA} (Random), and (5.11\%) under \ac{DIA} ($\xi=0.9$). All four alternatives have 95\% confidence intervals that are strictly below that of the fixed-weight baseline, indicating statistically significant improvements. \ac{DIA} ($\xi=0.9$) performs the best, with a confidence interval strictly below those of the other non-baseline methods. While \ac{DIA} (Random) outperforms the remaining alternatives, its mean cost is still statistically higher than \ac{DIA} ($\xi = 0.9$).

The computational results reveal a tradeoff between algorithmic runtime and solution quality. In particular, \ac{DIA} (Random) is the fastest adaptive method, requiring 3,660s, as compared with 13,752s for \ac{PIA} and 14,588s for \ac{DIA} ($\xi=0.9$), while still substantially improving upon the fixed-weight baseline. Although \ac{DIA} ($\xi=0.9$) achieves the lowest mean discounted cost, it requires roughly four times the runtime of \ac{DIA} (Random). This difference reflects the algorithmic design: \ac{PIA} and \ac{DIA} ($\xi=0.9$) repeatedly compute greedy actions through the target policy, whereas \ac{DIA} (Random) updates the policy directly from the dual variables. Consequently, \ac{DIA} (Random) offers the best balance between computational efficiency and policy performance, while \ac{DIA} ($\xi=0.9$) provides modest additional performance when greater computational effort is acceptable.

\subsection{Healthcare Scheduling}
\label{sec:scheduling}
\edits{We next study a multi-priority advance-scheduling problem for diagnostic imaging capacity \citep{patrick2008dynamic}. This setting yields a particularly challenging \ac{ALP} because of its large state and action spaces, which requires the use of more sophisticated approaches (e.g., column generation) to solve the underlying dual linear program.}

\smallskip 
\paragraph{Small-Case Scenario.} We first consider a small outpatient-clinic instance with $C_1 = 10$ appointment slots per day and a diversion capacity limit of $C_2 = 4$. Table~\ref{tab:scheduling_small} reports runtimes and discounted costs. The policies induced by fixed weights $\bc=\boldsymbol{\nu}$, Random, and \ac{DIA} (Random) have overlapping 95\% confidence intervals for discounted cost, suggesting that these approaches effectively produce the same \ac{ALP} policy. \ac{PIA} is the exception: its cost is more than $22$ times larger as the algorithm fails to converge, cycling between two poor policies rather than reaching a stable fixed point. Moreover, \ac{PIA} requires at least twice the runtime of \ac{DIA}. Nevertheless, the choice $\bc=\boldsymbol{\nu}$ performs particularly well, achieving a relatively low discounted cost and the shortest runtime by an order of magnitude.

\begin{table}[ht]
\centering
\footnotesize
\begin{tabular}{lrr}
  \toprule
  \textbf{Policy} &\textbf{ Runtime (s)} & \textbf{Discounted Cost ($\pm$95\% \ac{CI})} \\
  \midrule
  Policy induced by fixing $\bc=\boldsymbol{\nu}$   & 0.96 & $6,011.30 \pm 55.94$ \\
  Policy induced by Random   & 10.26 & $5,960.01 \pm 55.59$ \\
  Policy induced by \ac{PIA}            & 310.84 & $135,643.85 \pm 909.17$ \\
  Policy induced by \ac{DIA} (Random)  & 155.54 & $5,980.67 \pm 56.01$ \\
  \bottomrule
\end{tabular}
\caption{Estimated runtime and discounted costs for the small-case scheduling scenario.}
\label{tab:scheduling_small}
\end{table}


\begin{table}[ht]
\centering
\footnotesize
\begin{tabular}{lrr}
  \toprule
  \textbf{Policy} & \textbf{Runtime (s)} & \textbf{Discounted Cost ($\pm$95\% \ac{CI})} \\
  \midrule
  Policy induced by fixing $\bc=\boldsymbol{\nu}$          & 0.93 & $568,112.93 \pm 1,331.77$ \\
  Policy induced by Random   & 10.42 & $10,760.87 \pm 140.38$ \\
  Policy induced by \ac{PIA}            & 308.43 & $10,748.94 \pm 140.42$ \\
  Policy induced by \ac{DIA} (Random)  & 156.00 & $10,788.38 \pm 140.32$ \\
  \bottomrule
\end{tabular}
\caption{Estimated runtimes and discounted costs for the large-case scheduling scenario.}
\label{tab:scheduling_large}
\end{table}

\smallskip
\paragraph{Large-Case Scenario.} \edits{Following \citet{patrick2008dynamic}, we next consider a higher-capacity setting with $C_1=60$ scans per day, where most demand comes from lower-priority patients. Daily demand follows independent Poisson distributions with rates $\lambda=(10,20,30)$ across the three priority classes, while diversion capacity remains limited to $C_2=4$. All other model parameters remain unchanged. We evaluate each policy using $11,000$ Monte Carlo episodes of length $T=688$.}

In this larger instance, fixing $\bc=\boldsymbol{\nu}$ performs poorly. The induced policy cost is roughly 53 times larger than the others, which are statistically indistinguishable due to their overlapping 95\% confidence intervals. Unlike the small-case scenario, however, \ac{PIA} converges to a stable fixed point and its performance matches that of Random and \ac{DIA} (Random). Runtime follows a similar pattern as in the small scheduling example with Random requiring roughly an order of magnitude less computation time than \ac{DIA} (Random). We note that the projected feature moments $\bc^{\top}\bPhi$ are not identical across these three methods. However, the resulting \ac{ALP} coefficients $\br^{*}$ are nearly the same. This similarity reflects degeneracy in the projected dual problem: the dual ALP formulation only has $K_\Phi = 34$ dual constraints relative to a very large state-action space ($|\mathcal{S}| \geq 10^{58}$ states).

\smallskip
\paragraph{Linear vs.\ Higher-Order Basis Functions.}
\edits{The previous experiments for the multi-priority advance-scheduling problem suggest that the policy used to generate the state-relevance weights has limited effect on the resulting \ac{ALP} solution, as the Random policy performs well in both instances. We next examine whether this finding changes when the basis functions provide a richer approximation of the value function. Specifically, for a tiny instance (see Appendix~\ref{app:scheduling}), we increase the polynomial degree from 1 to 4, expanding the number of basis functions from $K_\Phi=5$ to $K_\Phi=70$.}

\edits{Results are reported in Table~\ref{tab:scheduling_full_state}. With the degree-one basis ($K_\Phi=5$), Random and \ac{DIA} (Random) produce the same policy and costs. Although the methods generate different values of $\bc^\top\bPhi$, the limited basis and degeneracy in \eqref{eq:alp_dual_expanded} yield the same dual solution and, consequently, the same $\br^*$ and greedy policy. With the more expressive degree-four basis ($K_\Phi=70$), however, the \ac{ALP} can better exploit the information contained in the state-relevance weights to produce a greedy policy that more closely reflects the states visited under the induced occupancy distribution. As a result, \ac{DIA} (Random) reduces discounted cost relative to Random by $1,490.42$ ($9.01\%$), with non-overlapping confidence intervals. Thus, adaptive weighting becomes more valuable when the basis functions are sufficiently expressive to translate occupancy information into improved policies.}


\begin{table}[ht]
\centering
\footnotesize
\begin{tabular}{lccccc}
  \toprule
  \textbf{Basis} & \textbf{$K_\Phi$} & \textbf{Random ($\pm$95\% \ac{CI})} & \textbf{\ac{DIA} (Random) ($\pm$95\% \ac{CI})} & \textbf{$\Delta$} & \textbf{$\Delta(\%)$} \\
  \midrule
  Degree 1 & 5 & $29{,}001.42 \pm 54.03$ & $29{,}001.42 \pm 54.03$ & $0.00$ & $0.0\%$ \\
  Degree 4 & 70 & $16{,}546.82 \pm 52.28$ & $15{,}056.40 \pm 43.97$ & $1{,}490.42$ & $9.01\%$ \\
  \bottomrule
\end{tabular}
\caption{A comparison of discounted costs between Random and \ac{DIA} (Random) under degree-one (5) and degree-four (70) basis functions. Here $\Delta = J_{\mathrm{Random}} - J_{\text{\ac{DIA} (Random)}}$. }
\label{tab:scheduling_full_state}
\end{table}

These experiments suggest three main findings. First, as in the controlled queueing settings, fixed state-relevance weights can perform well, but they are not robust, and it is generally difficult to determine a priori when they will produce poor policies. Second, \ac{PIA} is computationally expensive and may fail to converge to a fixed point in large-scale settings. Finally, \ac{DIA} (Random) provides an effective balance between solution quality and computational effort. \edits{Although it is more computationally demanding than using the Random policy directly, it can yield substantial performance improvements when the basis functions are sufficiently informative.}

\section{Conclusion}
\label{sec:conclusion}

This paper proposes \ac{DIA}, a dual-based method for adaptively selecting state-relevance weights in infinite-horizon discounted \ac{ALP} problems. \ac{DIA} constructs a smooth stochastic policy from a selected dual \ac{ALP} solution and updates the weights using the policy's normalized discounted occupancy measure. Under Lipschitz conditions on the selected dual solution and the policy-extraction map, the resulting update is a contraction and converges globally at a geometric rate to a unique fixed point that matches the discounted occupancy rate of the induced policy. We also derive an a posteriori performance certificate that separates the effects of the weighted Bellman residual, occupancy mismatch, and disagreement between the stochastic dual-derived policy and the greedy policy.

\edits{The numerical results establish two main findings. First, our dual-based approach matches or improves upon existing primal-based methods at substantially lower computational cost. Second, adaptive state-relevance weights can materially improve policy quality relative to heuristic weighting schemes when the basis functions are sufficiently expressive to exploit occupancy information. Otherwise, approximation error dominates and limits the influence of weight selection. Future research could use policy and occupancy differences to identify where richer basis functions are needed, develop adaptive smoothing schemes that balance convergence stability with information from the dual solution, further integrate dual information into greedy-policy construction, and extend the approach to sampled-constraint and continuous-state \ac{ALP} formulations.}

\bibliographystyle{informs2014}
\bibliography{ref}

\clearpage
\begin{APPENDICES}
\small


\section{Primal Iterative Algorithm}
\label{app:pia_algorithm}

Following \citet{le2004choose}, we combine two results established by \citet{de2003linear}: a greedy-policy performance bound \eqref{eq:pia_greedy_performance} and  an \ac{ALP} approximation bound \eqref{eq:pia_alp_approximation}. Let $\widetilde{\br}(\bc)$ be an optimal solution of \eqref{eq:alp_primal_expanded}, define
\[
\widehat{\bJ}_{\bc}
=
\bPhi\widetilde{\br}(\bc),
\]
and let $u_{\bc}$ be a greedy policy with respect to $\widehat{\bJ}_{\bc}$. Choose $\bv\in\mathbb R^K$ such that $V=\bPhi\bv$ is a Lyapunov
function for the discounted \ac{MDP}. Specifically, $V(x)>0$ for every $x\in\mathcal S$, and its associated Lyapunov constant,
\[
\beta_V
=
\max_{x\in\mathcal S}
\frac{
\alpha\displaystyle\max_{a\in\mathcal A(x)}
\sum_{y\in\mathcal S}P(y\mid x,a)V(y)}
{V(x)},
\]
satisfies $\beta_V<1$. For this Lyapunov function and any nonnegative weight vector $\boldsymbol\rho$, define
\[
\|\boldsymbol h\|_{1,\boldsymbol\rho}
=
\sum_{x\in\mathcal S}\rho(x)|h(x)|,
\qquad
\|\boldsymbol h\|_{\infty,1/V}
=
\max_{x\in\mathcal S}\frac{|h(x)|}{V(x)}.
\]

Because $\widehat{\bJ}_{\bc}$ is feasible for the \ac{ALP}, it satisfies $\widehat{\bJ}_{\bc}\leq\bJ^*$. The first bound relates the performance
of its greedy policy to its value-function approximation error:
\begin{equation}
\label{eq:pia_greedy_performance}
\|\bJ^{u_{\bc}}-\bJ^*\|_{1,\bnu}
\leq
\frac{1}{1-\alpha}
\|\bJ^*-\widehat{\bJ}_{\bc}\|_{1,\bmu_{u_{\bc},\bnu}}.
\end{equation}
Thus, the performance loss of $u_{\bc}$ is controlled by the approximation error in states that are frequently visited under $u_{\bc}$.

The second bound controls the approximation error produced by the \ac{ALP}:
\begin{equation}
\label{eq:pia_alp_approximation}
\|\bJ^*-\widehat{\bJ}_{\bc}\|_{1,\bc}
\leq
\frac{2\,\bc^\top V}{1-\beta_V}
\inf_{\br\in\mathbb R^K}
\|\bJ^*-\bPhi\br\|_{\infty,1/V}.
\end{equation}
This inequality compares the \ac{ALP} solution with the best approximation to $\bJ^*$ available in the span of the basis functions.

The weighting vectors in \eqref{eq:pia_greedy_performance} and \eqref{eq:pia_alp_approximation} coincide if $\bc$ satisfies the self-consistency equation
\begin{equation}
\label{eq:tallec_obj}
\bc^\top
=
\bmu_{u_{\bc},\bnu}^{\top}
=
(1-\alpha)\bnu^\top
\left(I-\alpha P_{u_{\bc}}\right)^{-1}.
\end{equation}
Substituting \eqref{eq:pia_alp_approximation} into \eqref{eq:pia_greedy_performance} then gives
\begin{equation}
\label{eq:tallec_bound}
\|\bJ^{u_{\bc}}-\bJ^*\|_{1,\bnu}
\leq
\frac{2\,\bc^\top V}
{(1-\alpha)(1-\beta_V)}
\inf_{\br\in\mathbb R^K}
\|\bJ^*-\bPhi\br\|_{\infty,1/V}.
\end{equation}
Therefore, self-consistency aligns the state-relevance weights used by the \ac{ALP} with the discounted occupancy measure appearing in the greedy-policy performance bound.

Motivated by \eqref{eq:tallec_obj}, the algorithm proposed by \citet{le2004choose}, which we refer to as the \ac{PIA}, seeks a fixed point of the mapping
\[
\bc
\longmapsto
\bmu_{u_{\bc},\bnu}.
\]
The \ac{PIA} is a special case of the general occupancy-based weight-selection framework in Algorithm~\ref{alg:general_weight_framework}. It follows the same
iterative structure: starting from the current state-relevance weights, it generates a policy, computes that policy's normalized discounted
occupancy measure, and uses the resulting occupancy to update the weights. For the \ac{PIA}, the policy generator and weight-update rule
in Algorithm~\ref{alg:general_weight_framework} are
\[
\mathcal G_{\mathrm{PIA}}(\bc)=u_{\bc},
\qquad
\mathcal U_{\eta}(\bc,\bmu)
=
(1-\eta)\bc+\eta\bmu,
\]
where $u_{\bc}$ is the deterministic greedy policy obtained from the
primal \ac{ALP} solution under weights $\bc$.

Accordingly, at iteration $m$, the \ac{PIA} solves the primal \ac{ALP} using $\bc^{(m)}$, constructs the greedy policy $u_{\bc^{(m)}}$, computes its normalized discounted occupancy, and updates 
\begin{equation}
\label{eq:pia_update}
\bc^{(m+1)}
=
(1-\eta)\bc^{(m)}
+
\eta\bmu_{u_{\bc^{(m)}},\bnu},
\qquad
\eta\in(0,1].
\end{equation}
The original iteration of \cite{le2004choose} corresponds to $\eta=1$, whereas \cite{de2008choosing} gives a damped variant with $\eta<1$. Following Algorithm~\ref{alg:general_weight_framework}, the procedure stops when 
$
\left\|
\bmu_{u_{\bc^{(m)}},\bnu}-\bc^{(m)}
\right\|_{\infty}
\leq\varepsilon
$
or when the iteration limit is reached.

Thus, the \ac{PIA} and \ac{DIA} share the same general occupancy-based updating procedure. Their principal difference lies in policy generation: the \ac{PIA} constructs a deterministic greedy policy from the primal \ac{ALP} solution, whereas the \ac{DIA} constructs a smooth randomized policy directly from a selected dual solution. Unlike the smooth dual-based map analyzed in Section~\ref{sec:convergence}, the greedy-policy map used by the \ac{PIA} can be discontinuous.
Consequently, its weight iterates may not converge.

\section{Numerical details}
\label{sec:numerical_details}

All linear programming formulations were implemented in Python using Gurobi version 12.0.3 \citep{gurobi}. Experiments were run on Compute Canada's Narval cluster; each node includes a 64-core AMD EPYC 7532 (Zen 2) at 2.40 GHz, 256M cache L3, and 250 GB of RAM. 

Reported runtimes correspond to policy-generation time. All costs are reported as sample means with 95\% confidence intervals.

\section{Queueing Experiments Settings}
\label{app:queue}

\subsection{Single-Dimensional Queueing System}
\label{app:single_queue}

We consider the single-dimensional controlled queueing example of \citet{de2003linear}. In each period, a job arrives with probability $p=0.2$. The controller selects a service-rate action
\[
a\in\mathcal A
=
\{0,0.2,0.4,0.6,0.8\}.
\]
The one-period cost incurred in state $x$ under action $a$ is
\[
g(x,a)=x+60a^3.
\]
The queue capacity is $S=49{,}999$, so that
\[
\mathcal S=\{0,1,\ldots,49{,}999\},
\qquad
|\mathcal S|=50{,}000.
\]
The discount factor is $\alpha=0.98$. The approximation contains the following four basis functions
\[
\phi_1(x)=1,
\qquad
\phi_2(x)=x,
\qquad
\phi_3(x)=x^2,
\qquad
\phi_4(x)=x^3.
\]

\subsection{Eight-Dimensional Queueing Network}
\label{app:8d_queue}
Figure~\ref{fig:8_d_queue_demo_app} illustrates the eight-dimensional queueing network from the examples studied by \citet{de2003linear} and \citet{de2008choosing}. The state is the queue-length vector
\[
\bx=(x_1,\ldots,x_8)\in\mathbb Z_+^8.
\]
Jobs arrive sequentially at queues~1 and~4 with probabilities $\lambda_1$ and $\lambda_4$, respectively. When a job from queue $i$ is processed, service is completed with probability $\mu_i$, for $i=1,\ldots,8$. Following a service completion, the job is either routed to a downstream queue or leaves the system.

Let $\ba=(a_1,\ldots,a_8)\in\{0,1\}^8$, where $a_i=1$ indicates that the corresponding server is assigned to queue $i$. The feasible server allocations satisfy
\[
a_1+a_3+a_8=1,
\qquad
a_2+a_6=1,
\qquad
a_4+a_5+a_7=1.
\]
Thus, each server is assigned to exactly one queue in its server-sharing group at every decision epoch. 
The one-period immediate cost is the total queue length,
\[
g(\bx)=\sum_{i=1}^8x_i=\|\bx\|_1,
\]
and the discount factor is $\alpha=0.995$. The basis functions contain all monomials in $\bx$ of total degree at most two: one constant term, eight linear terms, and 36 quadratic terms, giving a total of $K=45$ basis functions.

Because the state space $\mathbb Z_+^8$ is countably infinite, the \ac{ALP} is constructed using Bellman constraints generated from $5{,}000$ distinct sampled states. Following \citet{de2003linear}, these states are sampled from the initial distribution,
\[
\nu_{\xi}(\bx)
=
(1-\xi)^8\xi^{\|\bx\|_1},
\qquad
\bx\in\mathbb Z_+^8,
\]
with $\xi=0.9$. 

\begin{figure}[htbp]
  \centering
  \includegraphics[width=0.8\linewidth]{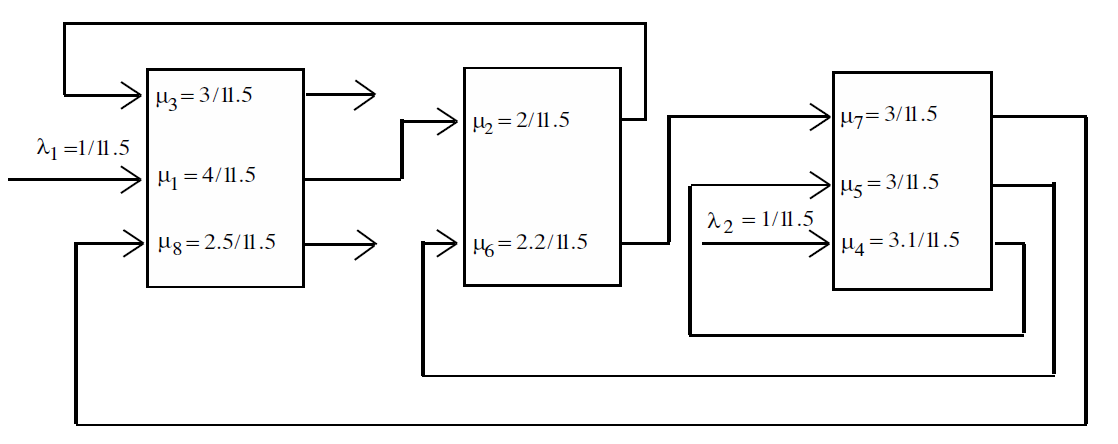}
  \caption{Eight-dimensional queueing network topology. Arrows show the routing of jobs between queues after service. Server-sharing groups are indicated by rectangles.}
  \label{fig:8_d_queue_demo_app}
\end{figure}

\section{Occupancy Rate and Convergence Insights}
\label{app:convergence_insights}

Figure~\ref{fig:discounted_occup} reports the normalized discounted occupancy measures in the single-dimensional queue. Neither fixed-weight policy matches the optimal occupancy across all initial distributions. In contrast, the \ac{PIA}- and \ac{DIA}\-induced policies match the optimal occupancy under \textit{Geometric} and closely approximate it under \textit{Empty-System}. Under the \textit{Reverse-Geometric} distribution, the system starts in a highly congested state and rarely visits queue lengths below 50; all policies assign negligible occupancy to those states. 
\begin{figure}[htbp]
  \centering
  \begin{subfigure}[b]{0.3\textwidth}
    \includegraphics[width=\linewidth]{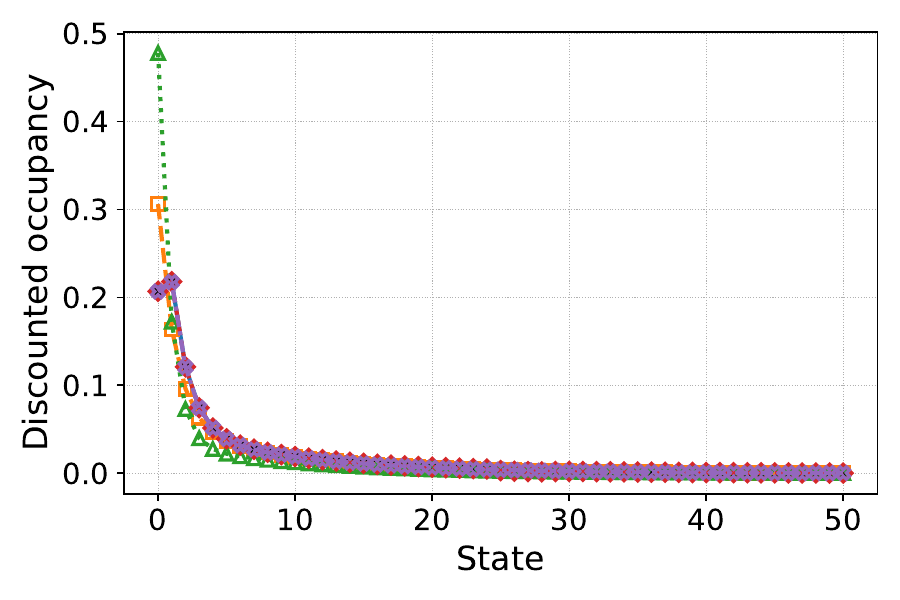}
    \caption{Geometric}
    \label{fig:discounted_occup_geo}
  \end{subfigure}\hfill
  \begin{subfigure}[b]{0.3\textwidth}
    \includegraphics[width=\linewidth]{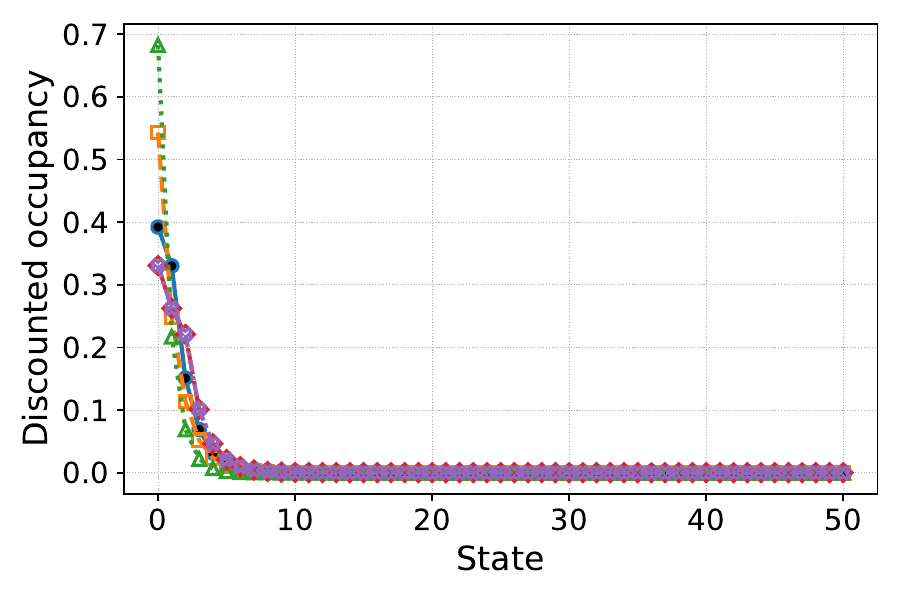}
    \caption{Empty}
    \label{fig:discounted_occup_empty}
  \end{subfigure}\hfill
  \begin{subfigure}[b]{0.3\textwidth}
    \includegraphics[width=\linewidth]{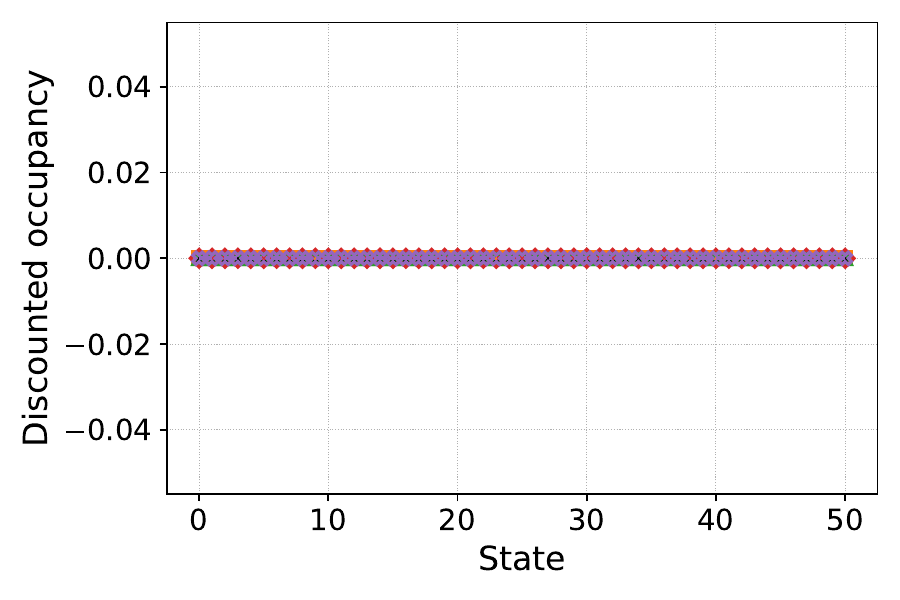}
    \caption{Reverse Geometric}
    \label{fig:discounted_occup_rev}
  \end{subfigure}

  \begin{subfigure}[b]{0.7\textwidth}
    \centering
    \includegraphics[width=\linewidth]{Fig_legend_bw.pdf}
  \end{subfigure}
  \caption{Normalized discounted occupancies under three initial state distributions.}
  \label{fig:discounted_occup}
\end{figure}

Figure~\ref{fig:8d_queue_convergence} reports the successive-iterate difference $\|\bc^{(m)\top}\Phi-\bc^{(m-1)\top}\Phi\|_1$. As the figure shows, \ac{DIA} moves steadily toward a fixed point, while \ac{PIA} oscillates. The oscillation reflects the non-smooth dependence of the greedy policy on the \ac{ALP} weight vector. A small change in weights can simultaneously change the greedy action in multiple states, causing a discontinuous shift in the occupancy measure used for the next iteration. Therefore, as the state space grows, \ac{DIA} (Random) captures most of the benefits of adaptive weighting at a much lower computational cost than \ac{PIA}, while \ac{DIA} ($\xi = 0.9$) further lowers mean cost but essentially forfeits that computational advantage.
\begin{figure}[htbp]
  \centering
  \includegraphics[width=0.5\linewidth]{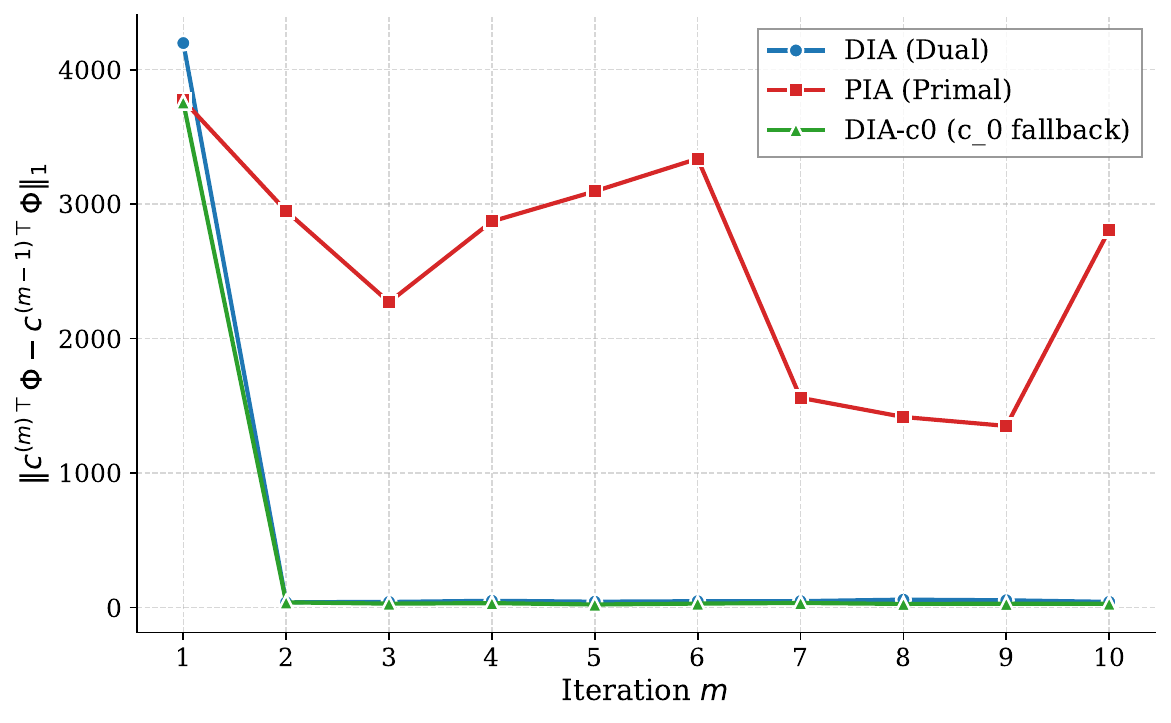}
  \caption{Convergence of \ac{PIA} and \ac{DIA} for the eight-dimensional queue: $\|\boldsymbol{c}^{(m)\top}\Phi - \boldsymbol{c}^{(m-1)\top}\Phi\|_1$ across iterations.}
  \label{fig:8d_queue_convergence}
\end{figure}

\section{Multi-Priority Patient Scheduling \ac{MDP}}
\label{app:scheduling}
The multi-priority advance-scheduling problem described in \citet{patrick2008dynamic} is modeled as a discrete time, infinite-horizon discounted-cost \ac{MDP}. The system state is
\begin{equation*}
\boldsymbol{s} \;=\; (\boldsymbol{x},\boldsymbol{y})
             \;=\; (x_1,\dots,x_N;\; y_1,\dots,y_I),
\end{equation*}
where $x_n$ denotes the number of patients already booked on day $n$ of an $N$-day booking horizon, and $y_i$ is the number of waiting patients in priority class $i$ for $i=1,\dots,I$. The state space is
\begin{equation*}
\mathcal{S}
\;=\;
\left\{(\boldsymbol{x},\boldsymbol{y}) \in \mathbb{Z}_+^{N}\times \mathbb{Z}_+^{I}
\,:\,
0 \le x_n \le C_1 \ \text{for } n=1,\dots,N,\quad
0 \le y_i \le Q_i \ \text{for } i=1,\dots,I
\right\},
\end{equation*}
where $C_1$ is the daily base capacity and $Q_i$ is an upper bound on the number of priority-$i$ arrivals in a day. In the experiments, $Q_i$ is chosen
large enough that demand truncation has negligible practical impact.

\textbf{Action space}. In each decision epoch, the scheduler books $a_{in}$ priority-$i$ patients into appointment slots on day $n$ and diverts $z_i$ patients in class $i$ to diversion capacity. The feasible action set is:
\[
\mathcal{A}_s
=
\Bigl\{(a,z)\in \mathbb{Z}_+^{I \times N}\times \mathbb{Z}_+^{I}:\;
x_n+\textstyle\sum_{i=1}^I a_{in}\le C_1\ \forall n;\
\sum_{i=1}^I z_i\le C_2;\
\sum_{n=1}^N a_{in}+z_i\le y_i\ \forall i
\Bigr\}.
\]
The first set of constraints enforces the regular daily capacity: because $x_n$ patients are already booked on day $n$, at most $C_1-x_n$ additional patients can be scheduled on that day. The second constraint limits the total number of patients assigned to diversion capacity to $C_2$. The third set of constraints ensures that, for each priority class, the total number of patients scheduled in regular appointment slots or diverted does not exceed the number $y_i$ currently waiting. Patients not assigned through either option remain in the waiting list. Finally, all action variables are restricted to nonnegative integers.

\textbf{Transition dynamics}.
After taking action $(a,z)$, new arrivals $\boldsymbol{y}'=(y'_1,\ldots,y'_I)$
are drawn independently, with $y'_i\sim p_i(\cdot)$ and
$p(\boldsymbol{y}')=\prod_{i=1}^{I} p_i(y'_i)$.
The booking horizon rolls forward and the state updates:
\begin{equation*}
\begin{gathered}
(x_1,\ldots,x_N;\,y_1,\ldots,y_I)
\to
\Bigl(x_2+\textstyle\sum_i a_{i2},\,\ldots,\,
x_N+\textstyle\sum_i a_{iN},\,0;\\
y'_1+y_1-\textstyle\sum_n a_{1n}-z_1,\,\ldots,\,
y'_I+y_I-\textstyle\sum_n a_{In}-z_I\Bigr).
\end{gathered}
\end{equation*}

\textbf{One-period cost}.
The cost accounts for (i) late-booking penalties, (ii) diversion costs, and (iii) penalties for waiting list:
\[
c(\boldsymbol{a},\boldsymbol{z})
=\sum_{i,n} b_i(n)\,a_{in}
\;+\;\sum_{i} d_i\, z_i
\;+\;\sum_{i} f_i\,\Bigl(y_i-\sum_{n} a_{in}-z_i\Bigr),
\]
where the booking penalty reflects the class-$i$ wait-time target $T(i)$:
\[
b_i(n)=
\begin{cases}
0, & n<T(i),\\[3pt]
\displaystyle\sum_{k=1}^{n-T(i)} f_i, & n>T(i).
\end{cases}
\]

\textbf{Linear \ac{VFA}}.
\cite{patrick2008dynamic} adopt the approximation
\[
\hat{J}(\boldsymbol{s})
= W_0 \;+\; \sum_{n=1}^{N} V_n\,x_n \;+\; \sum_{i=1}^{I} W_i\,y_i,
\qquad V_n \ge 0,\; W_i \ge 0,
\]
which reduces the \ac{ALP} to $N+I+1$ primal variables. Because the constraint set remains prohibitively large, they work with the dual formulation and solve it through column generation.

\scriptsize
\begin{align} 
\min_{\boldsymbol{\lambda}\,\geq\,0}\quad
& \sum_{(\boldsymbol{s},\boldsymbol{a},\boldsymbol{z})}
   \lambda(\boldsymbol{s},\boldsymbol{a},\boldsymbol{z})\,
   c(\boldsymbol{a},\boldsymbol{z})\notag \\[4pt] 
\text{s.t.}\quad
& (1-\alpha)
  \sum_{(\boldsymbol{s},\boldsymbol{a},\boldsymbol{z})}
  \lambda(\boldsymbol{s},\boldsymbol{a},\boldsymbol{z}) \;=\; 1,
  \label{eq:app_dual_norm} \\[4pt]
& \sum_{(\boldsymbol{s},\boldsymbol{a},\boldsymbol{z})}
  \lambda(\boldsymbol{s},\boldsymbol{a},\boldsymbol{z})
  \Bigl( x_n - \alpha x_{n+1} - \alpha \sum_{i} a_{i,n+1} \Bigr)
  \;\ge\; \mathbb{E}_{\eta}[X_n],
  \quad \forall n=1,\ldots,N,
  \label{eq:app_dual_x} \\[4pt]
& \sum_{(\boldsymbol{s},\boldsymbol{a},\boldsymbol{z})}
  \lambda(\boldsymbol{s},\boldsymbol{a},\boldsymbol{z})
  \Bigl( (1-\alpha)\,y_i
  + \alpha \bigl( \sum_{n} a_{in}+z_i-\mathbb{E}[Y_i] \bigr) \Bigr)
  \notag\\
&\hfill \ge \mathbb{E}_{\eta}[Y_i],
  \quad \forall i=1,\ldots,I,
  \label{eq:app_dual_y}
\end{align}
\small
where $\mathbb{E}_{\eta}[\cdot]$ is the expectation under a marginal distribution $\eta$ on the state space. The right-hand sides of \eqref{eq:app_dual_x}--\eqref{eq:app_dual_y} are the projected state-relevance weights $\bc^{\top}\bPhi$.

\textbf{Projection-Based \ac{DIA}}.
Because the state-relevance weight vector $\bc$ is high-dimensional, \ac{DIA} is recast in feature space: at each iteration, instead of updating
$\bc$ directly, we update its projection onto $\bPhi$, i.e.,
\[
  \bc^{\top}\bPhi \;=\; \left(1,\mathbb{E}_{\eta}\!\left[\bPhi(X,Y)\right]\right)
  \;=\;
  \bigl(1, \mathbb{E}_{\eta}[X_1],\,\ldots,\,\mathbb{E}_{\eta}[X_N],\,
        \mathbb{E}_{\eta}[Y_1],\,\ldots,\,\mathbb{E}_{\eta}[Y_I]\bigr)^{\top},
\]
and terminate when this projected vector has converged \citep{de2008choosing}. This retains all information required by \eqref{eq:app_dual_x}--\eqref{eq:app_dual_y} while avoiding explicit manipulation of a prohibitively large $\bc$.

\textbf{Numerical settings: small-case scenario.} Following \citet{patrick2008dynamic}, the small-case scenario has a daily regular capacity of $C_1=10$ appointment slots and a daily diversion capacity of $C_2=4$. The system contains $I=3$ priority classes with wait-time targets of $(7,14,21)$ days and uses a $N=30$-day booking horizon. Daily arrivals are independent Poisson random variables with means $(5,3,2)$ and are truncated at three times their respective means, yielding maximum daily arrivals of $(15,9,6)$. Thus, the mean total daily demand is 10, equal to the regular daily capacity. The overtime cost is $d=100$ per scan, the late-booking penalty parameters are $(20,10,5)$, and the discount factor is $\alpha=0.99$. The linear approximation contains $K=N+I+1=34$ basis functions. Each policy is evaluated using $11{,}000$ Monte Carlo episodes of horizon $H=688$, chosen so that $\alpha^H=0.99^{688}<10^{-3}$. For the initial-state distribution $\bnu$, each $x_n$ follows a discrete uniform distribution on $\{0,\ldots,C_1\}$, while each $y_i$ follows the corresponding Poisson distribution truncated to $\{0,\ldots,Q_i\}$.

\textbf{Numerical settings: large-case scenario.} The large-case scenario increases the regular daily capacity to $C_1=60$ scans while retaining a daily diversion capacity of $C_2=4$. Daily arrivals for the three priority classes are independent Poisson random variables with means $(10,20,30)$ and are truncated at three
times their respective means. The mean total daily demand is therefore 60, equal to the regular daily capacity, with a larger proportion of demand arising from the lower-priority classes. All remaining parameters are unchanged from the small-case scenario: the wait-time targets are $(7,14,21)$ days, the booking horizon is $N=30$ days, the overtime cost is $d=100$ per scan, the late-booking penalties are $(20,10,5)$, and the discount factor is $\alpha=0.99$. The linear approximation  again contains $K=34$ basis functions. The initial distribution follows similarly: each $x_n$ follows a discrete uniform distribution on $\{0,\ldots,C_1\}$, while each $y_i$ follows the corresponding Poisson distribution truncated to $\{0,\ldots,Q_i\}$.

\textbf{Numerical settings: tiny-case scenario.} To evaluate higher-order basis functions, we restrict the state space to a tractable size so that the optimization problem can be solved directly rather than through column generation. With higher-order polynomial basis functions, the column-generation pricing problem generally becomes nonlinear and may require prohibitively long solution times. To this end, we consider a tiny instance with a daily regular capacity of $C_1=2$ appointment slots and a daily diversion capacity of $C_2=1$. The system contains $I=2$ priority classes with wait-time targets of $(1,2)$ days and uses an $N=2$-day booking horizon. Daily arrivals are independent Poisson random variables with means $(2,1)$ and are truncated at three times their respective means, yielding maximum daily arrivals of $(6,3)$. The waiting-cost parameters are $(100,1)$, the diversion costs are $(200,5)$, and the discount factor is $\alpha=0.99$. The resulting state space contains $3^2\times7\times4=252$ states. For the initial-state distribution $\bnu$, each $x_n$ follows a discrete uniform distribution on $\{0,\ldots,C_1\}$, while each $y_i$ follows a discrete uniform distribution on $\{0,\ldots,Q_i\}$. We compare complete polynomial approximations of degrees one and four, containing $K=5$ and $K=70$ basis functions, respectively.

\end{APPENDICES}

\end{document}